\documentclass[11pt]{article}
\usepackage{CSTheoryToolkitCMUStyle}
\usepackage{cleveref}
\usepackage{mdframed}
% \usepackage{biblatex}
% \addbibresource{CSTheoryToolkitCMUStyle.bib}

%%%%% Stuff you can change %%%%%%%%%%%%%%%%%%%%%%%%%%%%%%%%%%

\usepackage{float} % For creating new float environments
\floatstyle{plain} % Match the default style of figures
\newfloat{protocol}{htbp}{lop} % Define a new float called "protocol"
\floatname{protocol}{Protocol} % Change the caption name to "Protocol"
\newfloat{program}{htbp}{lop} % Define a new float called "protocol"
\floatname{program}{Program} % Change the caption name to "Protocol"
\usepackage{authblk}

\setlength{\affilsep}{0.4em}       % Change vertical spacing (default is 1em)

\begin{document}

\title{Quantum One-Time Memories from Stateless Hardware, Random Access Codes, and Simple Nonconvex Optimization}

\author{Lev Stambler\thanks{\href{mailto:levstamb@umd.edu}{levstamb@umd.edu}}}
\affil{Joint Center for Quantum Information and Computer Science,\\
University of Maryland, College Park, MD 20742, USA}
\affil{Department of Computer Science, University of Maryland, MD 20742, USA}
\affil{NeverLocal Ltd. London, WC2H 9JQ, UK}

\date{\today}
\maketitle
%%%%%%%%%%%%%%% General Crypto %%%%%%%%%%%%%%
\newcommand{\advers}{\mathcal{A}}
\newcommand{\Sim}{\mathcal{S}}
\newcommand{\SimOTM}{\mathcal{S}_{OTM}}
\newcommand{\cryptd}[1]{\widetilde{#1}}
\newcommand{\msg}{\text{msg}}
\newcommand{\msgSize}{m}
\newcommand{\msgSpace}{\set{0, 1}^\msgSize}
\newcommand{\aux}{\texttt{aux}}
\newcommand{\PRG}{\text{PRG}}
\newcommand{\randomGets}{\xleftarrow{\$}}
\newcommand{\negl}{\text{negl}}
\newcommand{\orac}{\calO}
\newcommand{\oracP}{\widetilde{\calO}}
\newcommand{\compInd}{\approx_c}
\newcommand{\Hyb}{\textbf{Hyb}}
\newcommand{\out}{\texttt{out}}
\newcommand{\Msgs}{\texttt{Msgs}}

%%%%%%%%%%%%%% Information-Theoretic Crypto %%%%%%%%%%%%%%
\newcommand{\Hmin}{H_{\text{min}}}
\newcommand{\HHill}{H_{\text{HILL}}}
\newcommand{\HMax}{H_{\text{max}}}
\newcommand{\HMaxc}{H_{\text{max}}^c}

%%%%%%%%%%%%% Obfuscation Crypto %%%%%%%%%%%%%%
\newcommand{\Obf}{\text{Obf}}
\newcommand{\Obfd}[1]{\cryptd{#1}}

%%%%%%%%%%% One Time Programs %%%%%%%%%%%%%
\newcommand{\OTMPrep}{\texttt{OTM.Prep}}
\newcommand{\OTMRead}{\texttt{OTM.Read}}

%%%%%%%% Quantum Info Basics %%%%%%%%%
\newcommand{\Iacc}{\text{I}_{\text{acc}}}

%%%%%%%% Quantum RAC  %%%%%%%%%%
\newcommand{\racEnc}[1]{\calE\left(#1\right)}

%%%%%%%%%%% Paper Specific %%%%%%%%%%%%%%
\newcommand{\OObfd}[1]{\overline{#1}}
\newcommand{\measChan}{\mathcal{M}}
\newcommand{\Iden}{\mathbb{I}}

%%%%%%%%%%%%% Personal %%%%%%%%%%%%%%
\newcommand{\LEV}[1]{\textcolor{red}{LEV: #1}}
\newcommand{\ChatGPT}[1]{\textcolor{blue}{TODO ChatGPT: #1}}

\newcommand{\pBad}{0.83}

\begin{abstract}
	We present a construction of one-time memories (OTMs) using classical-accessible stateless hardware, building upon the work of Broadbent et al.\ and Behera et al.\ \cite{Broadbent_2021, behera2021noise}.
	Unlike the aforementioned work, our approach leverages quantum random access codes (QRACs) to encode two classical bits, \( b_0 \) and \( b_1 \), into a single qubit state \( \racEnc{b_0 b_1} \) where the receiver can retrieve one of the bits with a certain probability of error.  

	To prove soundness, we define a nonconvex optimization problem over POVMs on \( \mathbb{C}^2 \). This optimization gives an upper bound on the probability of distinguishing bit \( b_{1-\alpha} \) given that the probability that the receiver recovers bit $b_\alpha$ is high.
	Assuming the optimization is sufficiently accurate, we then prove soundness against a polynomial number of classical queries to the hardware.
	%To prove soundness, we introduce a nonconvex optimization problem over POVMs on \( \mathbb{C}^2 \) that provides an upper bound on the probability of distinguishing bit \( b_{1-\alpha} \) when the receiver successfully recovers \( b_\alpha \) with high probability.
	%Assuming that the optimization yields sufficiently accurate results, we can then prove soundness against a polynomial number of classical queries to the hardware.
\end{abstract}

\section{Introduction}
One-time programs (and variants) were first formally introduced in Ref.~\cite{goldwasser2008one} and sit at the top of the cryptographic wish list.
Unfortunately, one-time programs are impossible to achieve in the standard model and idealized quantum model \cite{goldwasser2008one, Broadbent2012QuantumOP}.
Still, if hardware assumptions enable a primitive known as one-time memories, then one-time programs can be built \cite{goldwasser2008one}.
Ref.~\cite{Broadbent2012QuantumOP} also showed how to build (classical and quantum) one-time programs from OTMs in the Universal Composability (UC) model
A one-time memory (OTM) is a cryptographic primitive which can be used to build one-time programs and non-interactive secure two party computation \cite{Yao1982ProtocolsFS, goyal2010founding}.
OTMs can be thought of as a non-interactive version of oblivious transfer (OT) where a sending party, Alice, sends a one-time memory to a receiving party, Bob.
The memory encodes two classical strings, $m_0$ and $m_1$, and Bob can only learn one of the strings.
After Bob reads $m_b$, any encoding of $m_{1 - b}$ is ``erased'' and Bob cannot recover $m_{1 - b}$.
Currently, OTMs rely on hardware-specific assumptions such as tamper-proof hardware \cite{smid1981integrating, Katz2007UniversallyCM} and trusted execution environments (TEEs) \cite{Perrig2009ReducingTT, Zhao2019OneTimePM}.
Unfortunately, so-called ``replay'' attacks have been carried out on TEE based OTMs \cite{skarlatos2019microscope}.
A series of works \cite{liu2014single, liu2015privacy, liu2023depth, barhoush2023powerful} make a separate set of assumptions on quantum hardware to build one-time memories. 

Ref.~\cite{Broadbent_2021} began a line of work on constructing one-time memories assuming \emph{stateless} trusted hardware. 
Indeed, if we cannot immunize trusted hardware against replay attacks, then we can use trusted hardware in conjunction with quantum states to build one-time memories.
Importantly, to build one-time memories from (stateless) trusted hardware and quantum states, we must assume that the adversary cannot make \emph{quantum} queries to the hardware \cite{Broadbent_2021}.
Ref.~\cite{Broadbent_2021} proposes a protocol using Wiesner coding to build one-time memories from stateless hardware and quantum states but only prove security for a \emph{linear} number of queries to the hardware.
Behera et al.\ \cite{behera2021noise} provide a construction for noise-tolerant quantum tokens using Wiesner coding and stateless hardware.
Ref.~\cite{behera2021noise} then prove security for a \emph{polynomial} number of queries to the stateless hardware as well as provide a secure and noise-tolerant construction for one-time memories.	

Moreover, Chung et al.\ \cite{chung2019cryptography} construct disposable backdoors using random subspace states and stateless hardware.
Ref.~\cite{chung2019cryptography} also provides a construction for one-time memories using random subspace states and stateless hardware.

In a more complexity-theoretic focused work, Li et al.\ \cite{li2023classical} show that $\mathsf{BQP/qploy} \neq \mathsf{BQP/poly}$ and $\mathsf{QMA} \neq \mathsf{QCMA}$ relative to an \emph{classical}-accessible oracle (where the oracle can only be queried classically).
We note that this is a similar setting in our work where we model the stateless hardware as a classical-accessible oracle.
Interestingly, the existance of one-time programs in the presence of stateless hardware seems to provide a plausible seperation between providing a classical-accessible oracle and a quantum-accessible oracle: as noted by Ref.~\cite{Broadbent2012QuantumOP}, one-time programs are impossible with quantum-accessible oracles, but in this work we construct one-time memories with classical-accessible oracles.
We leave further exploration of this connection to future work.

\subsection{Main Result}
Our main result is the construction of one-time memories from stateless hardware and quantum random access codes (QRACs) \cite{ambainis1999dense} in lieu of Wiesner states.
Moreover, one of our key contributions is a simple proof of soundness which leverages a nonconvex optimization over POVMs for states in $\C^2$ (single qubit states)\footnote{
	We note that an $\epsilon$-net technique in combination with brute force search could potentially be used to solve our optimization problem with provable guarantees.
	We leave this as an open question for future work.
}.
Specifically, we use QRACs to encode two classical bits, $b_0, b_1$, into a single qubit state, $\racEnc{b_0 b_1}$, where the receiver can recover one of the bits with some probability of error.

We first provide a simple nonconvex optimization which upper-bounds the trace distance between $\racEnc{b_0 0}$ and $\racEnc{b_0 1}$ given that the receiver performed some POVMs which can recover $b_0$ with high probability.
Then, to construct our OTMs for messages $m_0, m_1$, we sample 2 $n$-bit strings, $r_0, r_1$.
We encode $r_0, r_1$ into $n$ QRAC states, $\racEnc{r_0^1 r_1^1} \otimes \ldots \otimes \racEnc{r_0^n r_1^n}$.
To ``unlock'' message $m_b$, the receiver must recover a certain fraction of $r_b$.
We then show that in order to recover $r_b$, the receiver must ``collapse'' the ability to distinguish between $\racEnc{r_b^i 0}$ and $\racEnc{r_b^i 1}$ for most $i \in [n]$.

For simplicty, we will model the trusted hardware as a black box which only accepts classical queries.

\section{Preliminaries}
\label{sec:prelims}
In this section, we will introduce the necessary background for our construction of one-time memories.

\subsection{Quantum Random Access Code (QRAC)}
\label{subsec:qrac}
% TODO: place back At a high level, a (quantum) random access code ((Q)RAC) is a .... (TODO cite).
A quantum random access code is a uniquely quantum primitive that allows a sender to encode multiple messages into a single quantum state.
A receiver can then perform a measurement on the state to recover one of the messages with some probability of error.
\begin{definition}[$2 \mapsto 1$ Quantum Random Access Code (QRAC) \cite{ambainis1999dense}]
	\label{def:qrac}
	A QRAC is an ordered tuple $(\calE, \mu_0, \mu_1)$ where $\calE : \F_2^2 \rightarrow \C^2$ and $2$ orthonormal measurements $\mu_\alpha = \set{\ket{\phi_\alpha^0}, \ket{\phi_\alpha^1}}$ for $\alpha \in \set{0, 1}$.
\end{definition}

For our purposes, we will consider a fixed $2 \mapsto 1$ QRAC.
Specifically, we will use the optimal $2 \mapsto 1$ QRAC from Ref.~\cite{ambainis1999dense}.
Define $\ket{\psi_\theta} = \cos(\theta) \ket{0} + \sin(\theta) \ket{1}$.
Then, the optimal $2 \mapsto 1$ QRAC is given by the following mapping:
\begin{align*}
\racEnc{00} \mapsto \ket{\psi_{\pi/8}}\bra{\psi_{\pi/8}} \;\; \; \;
&\racEnc{01} \mapsto \ket{\psi_{-\pi/8}} \bra{\psi_{-\pi/8}} \\
\racEnc{10} \mapsto \ket{\psi_{5\pi/8}} \bra{\psi_{5\pi/8}}\;\;
&\racEnc{11} \mapsto \ket{\psi_{-5\pi/8}} \bra{\psi_{-5\pi/8}}\;\;
\end{align*}
and the measurements are given by projecting onto $\mu_0 = \set{\ket{0}\bra{0}, \ket{1}\bra{1}}$ (the $Z$ basis)
and $\mu_1 = \set{\ket{\psi_{\pi/4}}\bra{\psi_{\pi/4}}, \ket{\psi_{-\pi/4}}\bra{\psi_{-\pi/4}}}$ (the $X$ basis).
%\begin{itemize}
%	\setlength{\itemsep}{0pt} % Adjust the spacing between items
%	\item $\mu_0 = \set{\ket{0}, \ket{1}}$ (the $Z$ basis)
%	\item $\mu_1 = \set{\ket{\psi_{\pi/4}}, \ket{\psi_{-\pi/4}}}$ (the $X$ basis)
%\end{itemize}

We also know that the optimal success probability of this QRAC is $\cos^2(\pi / 8) < 0.854$ \cite{ambainis1999dense}.

\subsection{POVMs, Trace Distance, and Distinguishing Probability}
% TODO: check
We will use a few basic definitions and theorems from quantum information theory.
For a more in-depth treatment, see Nielsen and Chuang \cite{nielsen2001quantum}.

\begin{definition}[POVM]
	Let $\calH$ be a Hilbert space. A Positive Operator-Valued Measure (POVM) on $\calH$ is a collection of operators $\set{M_i}_{i \in [m]}$ such that $M_i \geq 0$ for all $i$ and $\sum_{i=1}^m M_i = I$.
\end{definition}

POVMs are useful as they can represent any quantum measurement, including projective measurements.

\begin{proposition}[POVMs Capture All Measurements]
	Every quantum measurement can be represented as a POVM. In particular, for any projective measurement with measurement operators $P_i$, there exists a POVM $\{M_i\}$ such that $P_i^\dagger P_i = M_i$.
\end{proposition}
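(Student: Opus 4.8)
The plan is to unpack the general-measurement formalism (in the sense of Nielsen and Chuang) and observe that the map $A_i \mapsto A_i^\dagger A_i$ sends an arbitrary quantum measurement to a POVM with identical outcome statistics. First I would recall that a general quantum measurement on $\calH$ is specified by a collection of measurement operators $\set{A_i}_{i \in [m]}$ satisfying the completeness relation $\sum_{i=1}^m A_i^\dagger A_i = I$, and that on input state $\rho$ outcome $i$ occurs with probability $p_i = \mathrm{Tr}(A_i \rho A_i^\dagger) = \mathrm{Tr}(A_i^\dagger A_i \rho)$, where the second equality is cyclicity of the trace.

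Next I would define $M_i := A_i^\dagger A_i$ and verify the two POVM axioms. Positivity $M_i \geq 0$ holds because $\bra{v} A_i^\dagger A_i \ket{v} = \|A_i \ket{v}\|^2 \geq 0$ for every $\ket{v}$, and $\sum_{i=1}^m M_i = \sum_{i=1}^m A_i^\dagger A_i = I$ is precisely the completeness relation. Since $p_i = \mathrm{Tr}(M_i \rho)$ for all $\rho$, the POVM $\set{M_i}$ reproduces exactly the outcome distribution of the original measurement, which establishes the first assertion.

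For the ``in particular'' clause I would specialize to a projective measurement with operators $\set{P_i}$, i.e.\ orthogonal projectors obeying $P_i = P_i^\dagger$, $P_i P_j = \delta_{ij} P_i$, and $\sum_i P_i = I$. Taking $A_i = P_i$ in the construction above gives $M_i = P_i^\dagger P_i = P_i^2 = P_i$, so the induced POVM is $\set{P_i}$ itself, and in particular $P_i^\dagger P_i = M_i$ as claimed.

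I do not anticipate a genuine obstacle: the statement is essentially a definitional unpacking. The one point worth stating carefully is that ``quantum measurement'' must be read as a general measurement whose operators satisfy the completeness relation, rather than a priori as a projective measurement; once this convention is fixed, both parts follow immediately from cyclicity of the trace and positivity of operators of the form $A^\dagger A$.
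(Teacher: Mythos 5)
Your proof is correct and is the standard textbook argument (the one found in Nielsen and Chuang, whom the paper cites for this background). The paper itself states this proposition without proof as a preliminary fact, so there is no distinct ``paper approach'' to compare against; your derivation via $M_i := A_i^\dagger A_i$ from the general-measurement completeness relation, with the projective specialization $M_i = P_i^\dagger P_i = P_i^2 = P_i$, is exactly the argument the paper implicitly relies on.
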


Importantly, given a POVM, we can find the set of projectors that correspond to the POVM by using Naimark's Dilation Theorem.

\begin{theorem}[Naimark's Dilation Theorem, \cite{naimark1943representation}]
	Every POVM can be realized as a projective measurement in a higher-dimensional Hilbert space. Formally, given a POVM $\{M_i\}_{i \in [m]}$ on $\mathcal{H}$, there exists an auxiliary Hilbert space $\mathcal{H}_a$, an isometry $V : \mathcal{H} \to \mathcal{H} \otimes \mathcal{H}_a$, and projectors $\{P_i\}_{i \in [m]}$ on $\mathcal{H} \otimes \mathcal{H}_a$ such that:
	\[
		M_i = V^\dagger P_i V.
	\]

	Alternatively, the projectors can be constructed as:
	\[
		P_i = U \sqrt{M_i},
	\]
	where $U$ is an isometry acting on $\mathcal{H} \otimes \mathcal{H}_a$.
\end{theorem}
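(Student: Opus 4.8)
The plan is to give the dilation explicitly, since in finite dimension Naimark's theorem is really a short computation with a ``purifying flag'' register and no operator-algebraic machinery is needed for the statement as used here. Fix the POVM $\set{M_i}_{i\in[m]}$ on $\mathcal{H}$ and recall that each $M_i\geq 0$ has a unique positive semidefinite (hence self-adjoint) square root $\sqrt{M_i}$ with $\sqrt{M_i}^\dagger\sqrt{M_i}=M_i$. I would take $\mathcal{H}_a=\C^m$ with a fixed orthonormal basis $\set{\ket{1},\dots,\ket{m}}$ indexed by the outcomes, define the isometry candidate
\[
	V:\mathcal{H}\to\mathcal{H}\otimes\mathcal{H}_a,\qquad V\ket{\psi}=\sum_{i=1}^{m}\bigl(\sqrt{M_i}\ket{\psi}\bigr)\otimes\ket{i},
\]
and take $P_i=\Iden_{\mathcal{H}}\otimes\ket{i}\bra{i}$ on $\mathcal{H}\otimes\mathcal{H}_a$. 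The $P_i$ are orthogonal projectors with $\sum_i P_i=\Iden_{\mathcal{H}\otimes\mathcal{H}_a}$ by construction, so the whole proof reduces to two checks.

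First I would check that $V$ is an isometry: for $\ket{\phi},\ket{\psi}\in\mathcal{H}$,
\[
	\langle V\phi\,|\,V\psi\rangle=\sum_{i,j=1}^{m}\bra{\phi}\sqrt{M_i}\sqrt{M_j}\ket{\psi}\,\langle i|j\rangle=\sum_{i=1}^{m}\bra{\phi}M_i\ket{\psi}=\langle\phi|\psi\rangle,
\]
using $\langle i|j\rangle=\delta_{ij}$, self-adjointness of $\sqrt{M_i}$, and the completeness relation $\sum_i M_i=\Iden_{\mathcal{H}}$. Second I would check the dilation identity: since $P_i V\ket{\psi}=\bigl(\sqrt{M_i}\ket{\psi}\bigr)\otimes\ket{i}$ and $P_i^\dagger P_i=P_i$,
\[
	\bra{\phi}V^\dagger P_i V\ket{\psi}=\langle P_i V\phi\,|\,P_i V\psi\rangle=\langle\sqrt{M_i}\phi\,|\,\sqrt{M_i}\psi\rangle=\bra{\phi}M_i\ket{\psi},
\]
so $V^\dagger P_i V=M_i$ for every $i$. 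For the ``alternatively'' clause, note that $P_i V=W_i\sqrt{M_i}$ where $W_i:\ket{\psi}\mapsto\ket{\psi}\otimes\ket{i}$ is an isometry $\mathcal{H}\to\mathcal{H}\otimes\mathcal{H}_a$; equivalently $V=\sum_i W_i\sqrt{M_i}$, which exhibits the $\sqrt{M_i}$ as the ``components'' of the projective measurement relative to a fixed isometric embedding, matching the quoted form.

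I do not expect a genuine obstacle here — the argument is pure bookkeeping — so the ``hard part'' is only to be careful about three things: (i) using the positive semidefinite root, so that $\sqrt{M_i}^\dagger\sqrt{M_i}=M_i$ rather than merely $E_i^\dagger E_i=M_i$ for some non-self-adjoint $E_i$ (any choice $E_i=W_i\sqrt{M_i}$ with $W_i$ unitary also works and corresponds to the residual unitary freedom in the dilation); (ii) acting on the correct tensor factor, i.e. $P_i=\Iden_{\mathcal{H}}\otimes\ket{i}\bra{i}$ and not the reverse; and (iii) dimension counting — taking $\dim\mathcal{H}_a=m$ always suffices, and if a smaller or larger auxiliary space is desired one simply trims to or pads beyond the span of the $\sqrt{M_i}$, which changes nothing. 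The same construction works verbatim for countably many outcomes, and for general separable $\mathcal{H}$ or continuous outcome sets it is the finite-outcome special case of Stinespring dilation; for the present paper only $\mathcal{H}=\C^2$ with finitely many outcomes is needed, so the explicit finite construction above is all that is required.
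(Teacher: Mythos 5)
Your construction is the standard one and it is correct: $\mathcal{H}_a=\C^m$, $V\ket{\psi}=\sum_i(\sqrt{M_i}\ket{\psi})\otimes\ket{i}$, $P_i=\Iden_\calH\otimes\ket{i}\bra{i}$, with both the isometry check and the identity $V^\dagger P_i V=M_i$ carried out cleanly via self-adjointness of the positive square root and completeness $\sum_i M_i=\Iden$. Note, however, that the paper does not supply a proof of this theorem at all; it is stated as a cited black box, so there is no in-paper argument to compare against. Your proof is therefore supplementary rather than an alternative route.

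One point worth making explicit, which you gesture at but soften: the ``alternatively'' clause as literally written in the statement, $P_i = U\sqrt{M_i}$ with $P_i$ a projector on $\calH\otimes\calH_a$, is type-incorrect. The object $U\sqrt{M_i}$ is a map $\calH\to\calH\otimes\calH_a$ (non-square), not a projector on the enlarged space, so it cannot equal $P_i$. Your reading of it as describing the Kraus/measurement operators of the dilated measurement — i.e. $P_iV = W_i\sqrt{M_i}$ with $W_i$ an isometry — is the correct repair, and it matches how the paper actually uses the clause later (in the proof of the nonconvex-optimization claim, where the post-measurement state is written with Kraus operators $M_{\wtb}=W\sqrt{E_{\wtb}}$). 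It would strengthen your write-up to state plainly that the quoted ``alternatively'' form is an abuse of notation for the Kraus decomposition $K_i = W_i\sqrt{M_i}$ satisfying $K_i^\dagger K_i = M_i$, rather than an alternative description of the projectors themselves.
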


Finally, we will use the trace distance to quantify the distinguishability between quantum states.

\begin{definition}[Trace Distance]
	The trace distance between two quantum states $\rho$ and $\sigma$ is defined as:
	\[
		T(\rho, \sigma) =  \frac{1}{2} \|\rho - \sigma\|_1,
	\]
	where $\|\cdot\|_1$ denotes the trace norm, $\|\rho - \sigma\|_1 = \mathrm{Tr} \sqrt{(\rho - \sigma)^\dagger (\rho - \sigma)}$.
\end{definition}

For our purposes, trace distance is useful as it directly gives an upper bound on the probability of distinguishing two quantum states.

\begin{proposition}[Trace Distance and Distinguishability]
	The trace distance $T(\rho, \sigma)$ quantifies the distinguishability between two quantum states. Specifically, for any POVM $\{M_i\}$, the maximum probability of distinguishing $\rho$ and $\sigma$ is:
	\[
		P_{\text{dist}} = \frac{1}{2} (1 + T(\rho, \sigma)).
	\]
\end{proposition}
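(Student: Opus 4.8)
The plan is to recognize this as the Holevo--Helstrom bound for binary state discrimination with a uniform prior, and to prove it in the standard way (cf.\ Nielsen and Chuang). First I would reduce to a two-outcome POVM $\{M_0, M_1\}$ with $M_0 + M_1 = I$: given an arbitrary POVM $\{M_i\}$ together with a decision rule that assigns each outcome $i$ a guess in $\{\rho,\sigma\}$, coarse-grain by letting $M_0$ be the sum of the $M_i$ over outcomes labelled ``$\rho$'' and $M_1$ the sum over outcomes labelled ``$\sigma$''; this leaves the guessing probability unchanged while producing a legitimate two-outcome POVM. With prior $\frac{1}{2}$ on each state, the probability of a correct guess is
\[
	P_{\text{dist}} = \frac{1}{2}\,\mathrm{Tr}(M_0 \rho) + \frac{1}{2}\,\mathrm{Tr}(M_1 \sigma) = \frac{1}{2} + \frac{1}{2}\,\mathrm{Tr}\!\left(M_0(\rho - \sigma)\right),
\]
using $M_1 = I - M_0$ and $\mathrm{Tr}(\sigma) = 1$. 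So the problem reduces to maximizing $\mathrm{Tr}(M_0(\rho-\sigma))$ over operators $0 \le M_0 \le I$.

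Next I would diagonalize the Hermitian operator $\Delta := \rho - \sigma$ and split it into its positive and negative parts, $\Delta = \Delta_+ - \Delta_-$, where $\Delta_\pm \ge 0$ have orthogonal supports, so that $\|\Delta\|_1 = \mathrm{Tr}(\Delta_+) + \mathrm{Tr}(\Delta_-)$. For the upper bound, for any $0 \le M_0 \le I$ write $\mathrm{Tr}(M_0\Delta) = \mathrm{Tr}(M_0\Delta_+) - \mathrm{Tr}(M_0\Delta_-) \le \mathrm{Tr}(M_0\Delta_+) \le \mathrm{Tr}(\Delta_+)$, where the first inequality uses $M_0 \ge 0$ and $\Delta_- \ge 0$ and the second uses $M_0 \le I$ and $\Delta_+ \ge 0$. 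For achievability, take $M_0^\star$ to be the projector onto the support of $\Delta_+$; this is a valid effect ($0 \le M_0^\star \le I$) and gives $\mathrm{Tr}(M_0^\star\Delta) = \mathrm{Tr}(\Delta_+)$. Finally, since $\mathrm{Tr}(\Delta) = \mathrm{Tr}(\rho) - \mathrm{Tr}(\sigma) = 0$, we get $\mathrm{Tr}(\Delta_+) = \mathrm{Tr}(\Delta_-)$, hence $\mathrm{Tr}(\Delta_+) = \frac{1}{2}\|\rho-\sigma\|_1 = T(\rho,\sigma)$, and plugging back in yields $P_{\text{dist}} = \frac{1}{2}(1 + T(\rho,\sigma))$.

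There is no genuine obstacle here, as this is a textbook result; the only care needed is bookkeeping — verifying that the coarse-graining step really preserves the guessing probability and that $M_0^\star$ is a legitimate POVM element. One could alternatively bypass the explicit Jordan decomposition via the variational identity $\|\Delta\|_1 = \max_{-I \le X \le I} \mathrm{Tr}(X\Delta)$ with the substitution $X = 2M_0 - I$, but I would keep the positive/negative-part argument since it also exhibits the optimal measurement explicitly, which is the form the later sections of the paper will rely on when analyzing the QRAC measurements.
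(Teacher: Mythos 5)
Your proof is correct and is the standard textbook argument (the Holevo--Helstrom bound via the Jordan decomposition of $\rho - \sigma$ into orthogonally-supported positive and negative parts). The paper itself does not supply a proof for this proposition --- it is stated as background in the preliminaries and implicitly deferred to Nielsen and Chuang --- so there is nothing to contrast against, but your write-up is exactly what such a proof would look like, including the correct reduction from an arbitrary labelled POVM to a two-outcome effect, the upper bound $\mathrm{Tr}(M_0\Delta) \le \mathrm{Tr}(\Delta_+)$, the optimal projector, and the observation that $\mathrm{Tr}(\Delta)=0$ forces $\mathrm{Tr}(\Delta_+) = \tfrac{1}{2}\|\Delta\|_1$.
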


%\newcommand{\CF}[1]{CF\left[#1\right]}
%\subsection{VBB Obfuscation}
%
%We will follow the definition of VBB obfuscation from Ref.~\cite{bartusek2019new}.
%\begin{definition}[VBB Obfuscation, \cite{bartusek2019new}]
%	\label{def:vbb}
%	\begin{enumerate}
%		\item Strong Functionality Preservation: For every $n \in \N$, $C \in \calC_n$, there exists a neglible function $\mu$ such that
%			\[
%				\Pr[\Obf(C, 1^n)(x) = C(x) \forall x \in \set{0, 1}^n] = 1 - \mu(n).
%			\]
%		\item Polynomial Slowdown: For every $n \in \N$ and $C \in \calC_n$, the evaluation of $\Obf(C, 1^n)$ is polynomially slower than the evaluation of $C$.
%		\item Distributional Virtual Black-Box: For every PPT adversary $\advers$, there exists a (non-uniform) polynomial size simulator $\Sim$ such that for every \(n \in \N\), every distribution \(D \in \mathcal{D}_n\) (a distribution over \(C_n\)), and every predicate \(P: C_n \to \{0, 1\}\), there exists a negligible function \(\mu\) such that
%			\[
%				\left| 
%				\Pr_{C \leftarrow \calD_n} \left[ \advers(\text{Obf}(C, 1^n)) = P(C) \right] - 
%				\Pr_{C \leftarrow \calD_n} \left[ \Sim^{C}(1^{|C|}, 1^n) = P(C) \right] 
%				\right| = \mu(n).
%			\]
%	\end{enumerate}
%\end{definition}

\subsection{One Time Memories}
Though more complex definitions of one-time memories exist (such as the UC-based definitions in Ref.~\cite{Broadbent2012QuantumOP}), we will use a simpler definition which is very similar to Ref.~\cite{liu2023depth} for the purposes of this paper.
%Because we do not assume any assumptions/ heuristics beyond the restrictions on the quantum adversary, we can simplify the soundness property from Ref.~\cite{liu2023depth} to be non-oracular.

\begin{definition}[One-Time Memory]
	\label{def:otmcorr}
	A one-time memory is a protocol between a sender and receiver which can be represented as a tuple of algorithms $(\OTMPrep, \OTMRead)$ where
	\begin{itemize}
		\item $\OTMPrep$ is a probabilistic algorithm which takes as input $m_0, m_1 \in \Msgs$ and outputs a quantum state $\rho$ as well as classical auxiliary information $\aux$.
		\item $\OTMRead$ is a (potentially probabilistic) algorithm which takes as input $\rho, \aux$ and $\alpha \in \set{0, 1}$ and outputs a classical string $m_\alpha$ with probability $1 - \delta$.
	\end{itemize}
\end{definition}

\begin{definition}[One-Time Memory Soundness]
	\label{def:otmSound}
	A one-time memory $(\OTMPrep, \OTMRead)$ is sound relative to an adversary $\calA$ if there is a simulator $\Sim$ and a negligible function $\gamma(\cdot)$ for every $m_0, m_1 \in \Msgs$. The simulator $\Sim$ makes at most one classical query to $g^{m_0, m_1}: \set{0, 1} \rightarrow \set{m_0, m_1}$ (where $g(\alpha) = m_\alpha$). The requirement is:

	\[
		\calA\left(\OTMPrep(1^{\lambda}, m_0, m_1)\right) \overset{\gamma(\lambda)}{\approx_c} \SimOTM^{g^{m_0, m_1}}\left(1^{\lambda} \right),
	\]

	where $\overset{\gamma(\lambda)}{\approx_c}$ denotes computational indistinguishability for a negligible function $\gamma(\lambda)$.
\end{definition}

\newcommand{\sciPyVal}{0.253}
\newcommand{\sciPyUpper}{0.3}
\newcommand{\guessProb}{0.65}
\newcommand{\guessProbInv}{0.2}
\newcommand{\pAcc}{0.85}
\newcommand{\pMax}{0.854}
\newcommand{\wtb}{c_0}
\newcommand{\wtbOne}{c_1}

\section{Distinguishing After Measurement}
In this section, we outline the non-convex program to get a bound on the trace distance between two QRAC states after some measurement.
Without loss of generality, we assume that the receiver can recover bit $b_0$ with high probability.
Then, we bound the trace distance between the states $\racEnc{b_0 0}$ and $\racEnc{b_0 1}$ after the receiver measures the state $\racEnc{b_0 b_1}$ and recovers $b_0$.
Importantly, this bound on trace distance only holds \emph{when the probability of recovering $b_0$ is high}.

\begin{program}[H]
	\begin{mdframed}
		Optimize POVM $E_0, E_1$ to maximize
		\begin{align*}
			\max_{b_0} \;
			T\bigg(&
				{\sqrt{E_{0}} \racEnc{b_0 0} \sqrt{E_{0}}^\dagger} + {\sqrt{E_{1}} \racEnc{b_0 0} \sqrt{E_{1}}^\dagger},  \\
			       &\; {\sqrt{E_{0}} \racEnc{b_0 1} \sqrt{E_{0}}^\dagger} + {\sqrt{E_{1}} \racEnc{b_0 1} \sqrt{E_{1}}^\dagger}
		       \bigg)
			\end{align*}
			for $b_0 \in \set{0, 1}$ subject to
			\begin{itemize}
				\item $E_0$ and $E_1$ form a POVM
				\item For $\rho_0 = \half \left(\racEnc{0 0} + \racEnc{0 1}\right)$ and $\rho_1 = \half\left(\racEnc{1 0} + \racEnc{1 1}\right)$,
					\[
						\frac{1}{2}\left(\Tr[E_0 \rho_0] + \Tr[E_1 \rho_1]\right) \geq \pBad.
					\]
			\end{itemize}
		\end{mdframed}
		\caption{Optimization to Bound Cryptographic Disturbance}
		\label{fig:disturbance}
	\end{program}
	\begin{claim}
		\label{claim:nonconv}
		Let $\measChan$ be a quantum channel, $\aux$ be a quantum state independent of $b_1$, and $b_0, b_1$ be uniformly random bits.
		If $\Pr_{b_0, b_1, \measChan}[b_0 = \wtb \mid \wtb, \rho \gets \measChan(\racEnc{b_0 b_1})] \geq \pBad$, then
		\[
			T(\measChan(\racEnc{b_0 b_1}) \otimes \aux, \measChan(\racEnc{b_0 (1 - b_1)}) \otimes \aux) \leq \sciPyUpper.
		\]
	\end{claim}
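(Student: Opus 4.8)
The plan is to reduce Claim~\ref{claim:nonconv} to the value of Program~\ref{fig:disturbance}, which we take as an input to the argument (its optimum is at most $\sciPyVal < \sciPyUpper$, obtained numerically). First, $\aux$ can be discarded: trace distance is invariant under tensoring both arguments by one and the same state, and $\aux$ --- being independent of $b_1$ --- is literally the same register on both sides, so $T(\measChan(\racEnc{b_0 b_1}) \otimes \aux, \measChan(\racEnc{b_0 (1-b_1)}) \otimes \aux) = T(\measChan(\racEnc{b_0 b_1}), \measChan(\racEnc{b_0 (1-b_1)}))$. Second, I would extract a two-outcome POVM on the single input qubit: the guess $\wtb$ is obtained from $\measChan$'s output by some measurement $\set{G_0, G_1}$ (or is already a branch label of $\measChan$), and pulling it back through $\measChan$ --- using that $\measChan^\dagger$ is unital and completely positive --- gives $E_j := \measChan^\dagger(G_j)$, a POVM on $\C^2$ with $\Pr[\wtb = j \mid \text{input }\xi] = \Tr[E_j \xi]$. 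Averaging the hypothesis over uniform $b_0, b_1$ and writing $\rho_{b_0} = \tfrac{1}{2}(\racEnc{b_0 0} + \racEnc{b_0 1})$ turns it into $\tfrac{1}{2}(\Tr[E_0 \rho_0] + \Tr[E_1 \rho_1]) \ge \pBad$ --- exactly the feasibility constraint of Program~\ref{fig:disturbance}.

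Next I would bound the distinguishing quantity by the program's objective. By Naimark's dilation theorem, the post-measurement output of $\measChan$ is produced by the measurement channel $\Phi_E : \xi \mapsto \sqrt{E_0}\,\xi\,\sqrt{E_0}^\dagger + \sqrt{E_1}\,\xi\,\sqrt{E_1}^\dagger$ possibly followed by an input-independent channel, so monotonicity of trace distance under CPTP maps gives
\[
  T\!\left(\measChan(\racEnc{b_0 b_1}),\, \measChan(\racEnc{b_0 (1-b_1)})\right) \;\le\; T\!\left(\Phi_E(\racEnc{b_0 0}),\, \Phi_E(\racEnc{b_0 1})\right),
\]
using that $\set{\racEnc{b_0 0}, \racEnc{b_0 1}}$ is the same unordered pair for either value of $b_1$. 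The right-hand side, maximized over $b_0 \in \set{0,1}$ and over feasible POVMs, is precisely the value of Program~\ref{fig:disturbance}, hence at most $\sciPyVal < \sciPyUpper$, which is the claim. The geometry of the optimal QRAC makes the tradeoff transparent: for both $b_0$ one has $\racEnc{b_0 1} = Z\,\racEnc{b_0 0}\,Z$, so $\racEnc{b_0 0} - \racEnc{b_0 1} = \pm\tfrac{1}{\sqrt 2} X$ (the same up to sign for $b_0 = 0, 1$), while $\rho_0 - \rho_1 = \tfrac{1}{\sqrt 2} Z$; the constraint thus forces the POVM to distinguish well along the $Z$ axis, which necessarily collapses distinguishability along $X$. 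Equivalently, the whole claim can be recast as a nonconvex semidefinite-type problem over the Choi matrix of $\measChan$: among qubit channels with $\|\measChan(Z)\|_1$ large, bound $\|\measChan(X)\|_1$.

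The main obstacle is the status of Program~\ref{fig:disturbance} itself: it is nonconvex over POVMs on $\C^2$, so the value $\sciPyVal$ is found numerically rather than proved, and a fully rigorous argument would need the $\epsilon$-net plus brute-force search alluded to in the introduction. A secondary point to pin down is whether the classical guess register $\wtb$ is retained in the state being distinguished; if it is, the monotonicity step should target the outcome-tagged channel $\xi \mapsto \sum_j \ket{j}\!\bra{j} \otimes \sqrt{E_j}\,\xi\,\sqrt{E_j}^\dagger$, for which the bound becomes $\tfrac{1}{2}\sum_j \big\|\sqrt{E_j}\big(\racEnc{b_0 0} - \racEnc{b_0 1}\big)\sqrt{E_j}^\dagger\big\|_1$. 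By the triangle inequality this is at least Program~\ref{fig:disturbance}'s (untagged) objective, so one should verify separately that it too stays below $\sciPyUpper$ --- intuitively the register is close to a fixed classical copy of $b_0$ and its extra distinguishing power is controlled by the $\le 1 - \pBad$ error probability, so the gap between $\sciPyVal$ and $\sciPyUpper$ should absorb it, but this is the one step I would treat most carefully.
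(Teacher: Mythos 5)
Your proposal follows essentially the same route as the paper's proof: extract a two-outcome POVM on the input qubit, rewrite the success hypothesis as the feasibility constraint $\tfrac{1}{2}(\Tr[E_0\rho_0]+\Tr[E_1\rho_1]) \ge \pBad$, use a Naimark dilation to identify the post-measurement state with Program~\ref{fig:disturbance}'s objective, and invoke the numerical bound plus \cref{conj:nonconv}. Two minor improvements on the paper's presentation: you make the $\aux$-discarding step explicit, and you invoke data-processing (monotonicity of trace distance under CPTP maps) to handle any processing after the measurement, whereas the paper absorbs post-processing into the POVMs and appeals only to isometry-invariance.

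The more substantive point is the one you raise at the end, and you are right to flag it. The claim's hypothesis writes $\wtb, \rho \gets \measChan(\cdot)$, so $\measChan(\racEnc{b_0 b_1})$ in the conclusion is naturally the \emph{joint} state of the classical guess register and the post-measurement qubit; the quantity to bound is then the tagged trace distance $\tfrac{1}{2}\sum_j \bigl\|\sqrt{E_j}\bigl(\racEnc{b_0 0}-\racEnc{b_0 1}\bigr)\sqrt{E_j}^\dagger\bigr\|_1$, which dominates Program~\ref{fig:disturbance}'s untagged objective by the triangle inequality. The paper's proof computes only the averaged mixture $\sigma_{b_0 b_1} = \sum_j M_j\,\racEnc{b_0 b_1}\,M_j^\dagger$ and bounds its trace distance; it never addresses whether $\sciPyVal$ (or $\sciPyUpper$) also bounds the tagged quantity. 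Moreover, the downstream \cref{cor:guess} hands the adversary $b_0$ explicitly (which tracks $\wtb$ up to $\le 1-\pBad$ error), so the tagged version is what is actually used. Your heuristic that the slack between $\sciPyVal$ and $\sciPyUpper$ should absorb the difference is plausible but unproven; the clean fix is to re-optimize Program~\ref{fig:disturbance} directly with the tagged objective. In short, you have identified a genuine imprecision in the paper's own proof, not a defect of your approach.
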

	We can then prove the above claim assuming that the non-convex optimization problem of program~\ref{fig:disturbance} gives an ``accurate enough'' result as captured by the following conjecture.
	\begin{conjecture}
		\label{conj:nonconv}
		The optimal value of the optimization problem in program~\ref{fig:disturbance} is upper-bounded by $\sciPyUpper$.
		\footnote{
			We use two separate optimization techniques from SciPy's optimize library \cite{2020SciPy-NMeth}: \texttt{minimize} with\texttt{ SLSQP} and \texttt{differential evolution}.
			For the \texttt{minimize} method, we run the optimization $1,000$ times using random initializations of $E_0$ ($E_1$ then equals $I - E_0$).
			For differential evolution, we set the maximum number of iterations to $10,000$.
			Our SciPy non-convex program gives a value of $\sciPyVal$ but we round up to $\sciPyUpper$ for simplicity/ to account for potential error.
			The code for the optimization is available at this \href{https://gist.github.com/Lev-Stambler/9ca4a571d85d714f7c7589ebb6ea1b09}{GitHub Gist}.
		}
	\end{conjecture}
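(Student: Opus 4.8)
The plan is to solve Program~\ref{fig:disturbance} exactly in closed form, exploiting that it lives entirely on a single qubit; the optimum turns out to be $\sqrt{\tfrac12-4(\pBad-\tfrac12)^2}=\sqrt{0.0644}\approx 0.2538<\sciPyUpper$, which establishes the conjectured bound (and accounts for the reported value $\sciPyVal$). First I would collapse the objective to a scalar. Writing the four QRAC states in the Bloch picture, one checks $\racEnc{00}-\racEnc{01}=\tfrac1{\sqrt2}X$ and $\racEnc{10}-\racEnc{11}=-\tfrac1{\sqrt2}X$. The operator $\measChan(\rho):=\sqrt{E_0}\rho\sqrt{E_0}+\sqrt{E_1}\rho\sqrt{E_1}$ (the measure-and-forget channel of the POVM, using that $\sqrt{E_i}$ is Hermitian) is linear and trace-preserving, so both trace distances in the objective equal $\tfrac1{2\sqrt2}\|\measChan(X)\|_1$ and the outer $\max_{b_0}$ is vacuous. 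Since $\measChan(X)$ is traceless Hermitian it equals $w_1X+w_2Y+w_3Z$ with $\|\measChan(X)\|_1=2|\vec w|$ and $|\vec w|^2:=w_1^2+w_2^2+w_3^2$, so the objective equals $|\vec w|/\sqrt2$ and the claim reduces to $|\vec w|^2\le 2\,\sciPyUpper^2=0.18$.

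Next I would parametrize. Diagonalizing $E_0=\lambda\ket v\bra v+\mu\ket{v^\perp}\bra{v^\perp}$ with $\lambda,\mu\in[0,1]$ and letting $\vec v=(v_1,v_2,v_3)$ be the Bloch vector of $\ket v$, a one-line computation in the $E_0$-eigenbasis shows that in the Bloch representation $\measChan$ fixes the component along $\vec v$ and scales the orthogonal plane by $\kappa:=\sqrt{\lambda\mu}+\sqrt{(1-\lambda)(1-\mu)}$, which obeys $0\le\kappa\le1$ (Cauchy--Schwarz). Applying this to the Bloch vector $(1,0,0)$ of $X$ gives $|\vec w|^2=t+\kappa^2(1-t)$ with $t:=v_1^2\in[0,1]$. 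For the side constraint, the mixtures are $\rho_0=\tfrac12(I+\tfrac1{\sqrt2}Z)$ and $\rho_1=\tfrac12(I-\tfrac1{\sqrt2}Z)$, so $\tfrac12(\Tr[E_0\rho_0]+\Tr[E_1\rho_1])=\tfrac12+\tfrac1{2\sqrt2}\Tr[E_0Z]=\tfrac12+\tfrac{(\lambda-\mu)v_3}{2\sqrt2}$, and feasibility forces $(\lambda-\mu)\,v_3\ge 2\sqrt2\,(\pBad-\tfrac12)=0.66\sqrt2$ (taking $\lambda\ge\mu$ and $v_3\ge0$ without loss of generality).

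Finally I would finish with two elementary bounds, writing $\delta:=\lambda-\mu\in[0,1]$. (i) For fixed $\delta$, $\kappa\le\sqrt{1-\delta^2}$: substituting $\sqrt\lambda=\cos\phi_\lambda$, $\sqrt{1-\lambda}=\sin\phi_\lambda$ (and likewise for $\mu$) makes $\kappa=\cos(\phi_\lambda-\phi_\mu)$ and $\delta=-\sin(\phi_\lambda+\phi_\mu)\sin(\phi_\lambda-\phi_\mu)$, so $|\sin(\phi_\lambda-\phi_\mu)|\ge\delta$ and hence $\kappa\le\sqrt{1-\delta^2}$. (ii) From the constraint, $t=v_1^2\le v_1^2+v_2^2=1-v_3^2\le 1-(0.66\sqrt2/\delta)^2$. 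Since $|\vec w|^2=t+\kappa^2(1-t)$ is nondecreasing in $t$ and in $\kappa^2$, substituting both bounds gives $|\vec w|^2\le(1-\delta^2)+\delta^2\bigl(1-(0.66\sqrt2)^2/\delta^2\bigr)=1-(0.66\sqrt2)^2=0.1288$, independently of $\delta$. Hence the optimum of Program~\ref{fig:disturbance} is at most $\sqrt{0.1288/2}\approx 0.2538<\sciPyUpper$; moreover the bound is attained (e.g.\ $\lambda=1,\mu=0$, $v_2=0$, $v_3=0.66\sqrt2$), so it is in fact the exact optimal value, consistent with $\sciPyVal$.

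The only genuinely nonobvious step — presumably why the statement was posed as a conjecture — is the reduction above: recognizing that the single-qubit structure collapses the whole program to the scalar $\|\measChan(X)\|_1$, and that the POVM constraint is exactly a $\langle Z\rangle$-overlap bound on $E_0$. After that it is a two-parameter calculus problem. The minor points needing care are making (i) airtight (the branch $|\phi_\lambda-\phi_\mu|\ge\pi/2$ only decreases $\kappa$, so it is harmless) and checking $0.66\sqrt2<1$, so that the feasible region is nonempty but pins both $\delta$ and $v_3$ near $1$.
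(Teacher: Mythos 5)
Your proposal is correct as far as I can check, but it is doing something genuinely different from the paper: the paper offers \emph{no} proof of this statement at all --- it is posed as a conjecture supported only by heuristic numerics (SciPy \texttt{SLSQP} with random restarts and differential evolution), which cannot certify a global maximum of a nonconvex program. Your argument replaces that evidence with a closed-form solution. The key reductions all verify: $\racEnc{b_0 0}-\racEnc{b_0 1}=\pm\tfrac{1}{\sqrt2}X$, so by linearity the objective is $\tfrac{1}{2\sqrt2}\lVert\measChan(X)\rVert_1=\lvert\vec w\rvert/\sqrt2$ independently of $b_0$; the constraint collapses (using $E_1=I-E_0$ and $\rho_0-\rho_1=\tfrac{1}{\sqrt2}Z$) to $\Tr[E_0Z]\ge 2\sqrt2(\pBad-\tfrac12)$; in the eigenbasis of $E_0$ the measure-and-forget channel preserves the diagonal and scales the off-diagonal by $\kappa=\sqrt{\lambda\mu}+\sqrt{(1-\lambda)(1-\mu)}$, giving $\lvert\vec w\rvert^2=v_1^2+\kappa^2(1-v_1^2)$; and your two bounds $\kappa^2\le1-\delta^2$ (equivalent to $\bigl(\sqrt{\lambda(1-\lambda)}-\sqrt{\mu(1-\mu)}\bigr)^2\ge0$) and $v_1^2\le1-v_3^2\le1-8(\pBad-\tfrac12)^2/\delta^2$ combine, by monotonicity, to the $\delta$-free bound $\lvert\vec w\rvert^2\le1-8(\pBad-\tfrac12)^2$. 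This yields an optimum of exactly $\sqrt{\tfrac12-4(\pBad-\tfrac12)^2}\approx0.2538<\sciPyUpper$, attained at $\lambda=1,\mu=0$, which both matches and explains the reported numerical value $\sciPyVal$. What your route buys is substantial: it would upgrade the conjecture to a theorem, make the soundness proof unconditional, and supersede the footnote's suggested $\epsilon$-net/brute-force approach; it also gives the dependence of the bound on the threshold $\pBad$ in closed form. The only points to tighten in a write-up are minor: justify the WLOG $\lambda\ge\mu$, $v_3\ge0$ (relabeling the eigenvectors of $E_0$ flips $\vec v$ and swaps $\lambda,\mu$, leaving $v_1^2$, $\kappa$, and the constraint invariant), note that feasibility forces $\delta\ge2\sqrt2(\pBad-\tfrac12)>0$ so the division by $\delta$ is legitimate, and remark that the off-diagonal scaling is by a single nonnegative scalar, so no rotation in the plane orthogonal to $\vec v$ occurs.
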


	\begin{proof}[Proof of \cref{claim:nonconv}]
		First, note that if the adversary has $3$ or more POVMs, $E_0, E_1, \dots, E_\ell$, then the adversary must run some post-processing on the output of the channel to get a single bit.
		Thus, we can absorb the post-processing into a computation described by two POVMs.
		So then, model the output $\wtb$ as the output of POVMs $E_0, E_1$ and the channel $\measChan$ as returning the POVM result, $c_0$, and the post-measurement state, $\rho$.

		Then if $\rho, \wtb \gets \measChan(\racEnc{b_0 b_1})$ and $\Pr_{b_0, b_1, \measChan}[\wtb = b_0 \mid \rho, \wtb \gets \measChan(\racEnc{b_0 b_1})] \geq \pBad$, we have that
		\begin{align*}
			\Pr_{b_0, b_1, \measChan}[\wtb = b_0 \mid \wtb \gets \measChan(\racEnc{b_0 b_1})] &= \E_{b_0}\left[\Tr[E_{b_0} \rho_{b_0}]\right] \\
													  &= \Pr[b_0 = 0] \cdot \Tr[E_0 \rho_0] + \Pr[b_0 = 1] \Tr[E_1 \rho_1] \\
													  &= \frac{1}{2} \left(\Tr[E_0 \rho_0] + \Tr[E_1 \rho_1]\right)
		\end{align*}
		where $\rho_0$ and $\rho_1$ are defined in program~\ref{fig:disturbance}.
		Then, let $W$ be some isometry and
		\[
			M_{\wtb} = W \sqrt{E_{\wtb}}.
		\]
		By Naimark's Dialation Theorem \cite{naimark1943representation}, we have that our post-measurement state after $\measChan$ returns $\wtb$ is
		\[
			\frac{
				M_{\wtb} \; \racEnc{b_0 b_1} \; M_{\wtb}^\dagger.
			}{
				\Tr[E_{\wtb} \racEnc{b_0 b_1}]
			}.
		\]
		Also, we can note that the probability of the output of $\measChan$ returning $\wtb$ equals $\Tr[E_{\wtb} \racEnc{b_0 b_1}]$.
		So, the mixed state after the channel $\measChan$ is
		\begin{align*}
			\sigma_{b_0 b_1} = &\Pr[\wtb = 0 \mid \measChan(b_0 b_1)] \cdot \frac{
				M_{0} \; \racEnc{b_0 b_1} \; M_{0}^\dagger.
			}{
				\Tr[E_{0} \racEnc{b_0 b_1}]
			} + 
			\Pr[\wtb = 1 \mid \measChan(b_0 b_1)] \cdot \frac{
				M_{1} \; \racEnc{b_0 b_1} \; M_{1}^\dagger.
			}{
				\Tr[E_{1} \racEnc{b_0 b_1}]
			} \\
			&= 
			M_0 \; \racEnc{b_0 b_1} \; M_0^\dagger + M_1 \; \racEnc{b_0 b_1} \; M_1^\dagger.
\end{align*}

Then, by the upper bound on the optimization problem in program~\ref{fig:disturbance} and \cref{conj:nonconv}, we have that
\[
	T(\sigma_{b_0 0}, \sigma_{b_0 1}) = T\left(
		M_0 \; \racEnc{b_0 0} \; M_0^\dagger + M_1 \; \racEnc{b_0 0} \; M_1^\dagger,
		M_0 \; \racEnc{b_0 1} \; M_0^\dagger + M_1 \; \racEnc{b_0 1} \; M_1^\dagger,
	\right) \leq \sciPyUpper.
\]
as the trace distance between two states is invariant under conjugation by isometries.
\end{proof}

We can then use the fact that \[
	\Pr_{b_0, b_1, \measChan', \measChan}[ \wtbOne = b_1 \mid \wtbOne \gets \measChan'(\measChan(\racEnc{b_0 b_1}))] \leq \half + T(\sigma_0, \sigma_1)
\]
and that any auxiliary state independent of $b_1$ will not change the trace distance to show the following corollary for a channel, $\measChan'$, guessing $b_1$.
\begin{corollary}
	\label{cor:guess}
	Let $\Pr[b_0 = \wtb \mid \wtb \gets \calM(\racEnc{b_0, b_1})] \geq \pBad$.
	Then, for a channel $\measChan'$, $\aux_0$ independent of $b_0$ and $b_1$, and $\aux_1$ independent of $b_1$,
	\begin{align*}
		\Pr_{b_0, b_1, \measChan, \measChan'}\bigg[
			c_1 = b_1 
			\mid 
			\wtbOne \gets \measChan'(\measChan(\racEnc{b_0 b_1} \otimes \aux_0) \otimes b_0 \otimes \aux_1) ,
		\bigg] \leq \half (1 + T(\sigma_0, \sigma_1)) \leq \guessProb.
	\end{align*}
\end{corollary}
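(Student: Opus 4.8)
The plan is to derive \cref{cor:guess} from \cref{claim:nonconv} and our trace-distance distinguishability bound (the Helstrom bound), with the only real work being careful bookkeeping of the side information $\aux_0$, $\aux_1$ and the disclosed bit $b_0$.

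First I would absorb $\aux_0$ into the channel. Since $\aux_0$ is a fixed state independent of $b_0, b_1$, the map $\rho \mapsto \measChan(\rho \otimes \aux_0)$ is itself a quantum channel; call it $\measChan''$. The hypothesis $\Pr[b_0 = \wtb \mid \wtb \gets \measChan''(\racEnc{b_0 b_1})] \geq \pBad$ is then exactly the hypothesis of \cref{claim:nonconv} for $\measChan''$, so writing $\sigma_{b_0 b_1}$ for the post-channel mixed state exactly as in the proof of that claim, we obtain $T(\sigma_{b_0 0}, \sigma_{b_0 1}) \leq \sciPyUpper$. The point worth flagging here is the order of quantifiers: the objective of program~\ref{fig:disturbance} is $\max_{b_0} T(\cdot,\cdot)$, so \cref{conj:nonconv} bounds the trace distance for \emph{each} fixed value of $b_0$, not merely on average.

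Next I would dispose of $\aux_1$ and the revealed bit $b_0$. Fix $b_0 \in \set{0,1}$. Conditioned on this value, $b_1$ is still uniform (the bits are independent), the classical register holding $b_0$ is a fixed symbol, and $\aux_1$ is some fixed density operator that does not depend on $b_1$; hence the state fed to $\measChan'$ is $\sigma_{b_0 b_1} \otimes \tau_{b_0}$ for a $b_1$-independent $\tau_{b_0}$. Since trace distance is unchanged by tensoring both arguments with a common state, $T(\sigma_{b_0 0} \otimes \tau_{b_0},\, \sigma_{b_0 1} \otimes \tau_{b_0}) = T(\sigma_{b_0 0}, \sigma_{b_0 1}) \leq \sciPyUpper$. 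Applying the trace-distance distinguishability bound to the uniform-prior binary discrimination problem ``is the input $\sigma_{b_0 0}\otimes\tau_{b_0}$ or $\sigma_{b_0 1}\otimes\tau_{b_0}$?'', the probability that $\measChan'$ outputs $c_1 = b_1$ is at most $\half\left(1 + T(\sigma_{b_0 0}, \sigma_{b_0 1})\right)$.

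Finally I would average over $b_0$. Since $b_0$ is uniform and the per-$b_0$ estimate holds for both values,
\[
	\Pr_{b_0, b_1, \measChan, \measChan'}[c_1 = b_1] = \E_{b_0}\left[\Pr[c_1 = b_1 \mid b_0]\right] \leq \E_{b_0}\left[\half\left(1 + T(\sigma_{b_0 0}, \sigma_{b_0 1})\right)\right] \leq \half(1 + \sciPyUpper) = \guessProb,
\]
which is the stated bound (here $\sigma_0, \sigma_1$ in the corollary should be read as the worst-case-over-$b_0$, equivalently average-over-$b_0$, post-channel states). I expect the main obstacle to be precisely this conditioning step: one must confirm that handing $\measChan'$ the value of $b_0$ together with an $\aux_1$ that is itself allowed to depend on $b_0$ still leaves, for each fixed $b_0$, a clean binary state-discrimination problem in $b_1$ — so that \cref{claim:nonconv} (which is established pointwise in $b_0$) and the Helstrom bound compose without loss. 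Everything else is routine.
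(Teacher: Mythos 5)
Your proof is correct and follows the same route the paper sketches in the sentence preceding the corollary: invoke \cref{claim:nonconv} to bound $T(\sigma_{b_0 0},\sigma_{b_0 1})$, note that tensoring with a $b_1$-independent register leaves the trace distance unchanged, and finish with the Helstrom bound and an average over $b_0$. Your additional care is well placed — absorbing $\aux_0$ into $\measChan$ so the resulting map is a single-qubit channel to which \cref{claim:nonconv} applies, and conditioning on $b_0$ so that the problem reduces to a clean binary discrimination of $\sigma_{b_0 0}$ versus $\sigma_{b_0 1}$ tensored with a common $b_1$-independent ancilla — are exactly the bookkeeping steps the paper elides, and you correctly observe that \cref{conj:nonconv} (via the $\max_{b_0}$ in the objective) provides the pointwise-in-$b_0$ bound needed before averaging. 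One small thing worth noting: the displayed inequality the paper cites just before the corollary reads $\frac12 + T$, which appears to be a typo for the Helstrom bound $\frac12(1+T)$ that you and the corollary statement both use.
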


\newcommand{\fuzzyLock}{\texttt{FuzzyLock}}
\newcommand{\secL}{\lambda'}
\newcommand{\PGood}{\calP_{good}}
\newcommand{\PGeq}{\calP_{\geq \pBad}}
\newcommand{\PLt}{\calP_{< \pBad}}
\newcommand{\fracGood}{\frac{3n}{5}}
\newcommand{\fracBad}{\frac{2n}{5}}
\newcommand{\fracGoodMult}{\frac{5}{4}}
\newcommand{\totalPGoodSize}{\frac{3n}{4}}
\newcommand{\badContrUpper}{\frac{n}{2}}

\newcommand{\SBad}{\calP_{bad}}
\newcommand{\sizeBad}{\frac{n}{5}}

\section{One-Time Memory Protocol}
We now present the main result of this paper: a one-time memory protocol outlined in protocol~\ref{fig:otm} using QRACs and classical access VBB obfuscation.
We also outline a helper function in program~\ref{prog:fuzzy} which ``unlocks'' a message if the input is sufficiently close to some random string $r$.

\begin{program}[H]
	\begin{mdframed}
		$\fuzzyLock_{r, m}(r')$ for $r' \in \set{0, 1}^n$:
		\begin{itemize}
			\item If $|r' \oplus r| \leq (1 - \pAcc) n$, output $m$ % LEV: TODO
			\item Otherwise, output $\bot$
		\end{itemize}

	\end{mdframed}
	\caption{The fuzzy lock function}
	\label{prog:fuzzy}
\end{program}

\begin{protocol}
	\begin{mdframed}
		$\OTMPrep(n, m_0, m_1)$:
		\begin{itemize}
			\item Sample $r_0, r_1 \randomGets \set{0, 1}^n$
			\item Let $\rho_i = \racEnc{r_0[i] \; r_1[i]}$ and $\rho = \bigotimes_{i \in [n]} \rho_i$
			\item Let $\orac_0$ be the oracle which accepts classical queries and outputs $\fuzzyLock_{r_0, m_0}$ and $\orac_1$ be the oracle which accepts classical queries and outputs $\fuzzyLock_{r_1, m_1}$
			\item Return $\orac_0, \orac_1, \rho$
		\end{itemize}
		$\OTMRead(\orac_0, \orac_1, \rho, \alpha)$ for $\alpha \in \set{0, 1}$
		\begin{itemize}
			\item Measure $\rho$ in $\mu_\alpha$ as defined in \cref{subsec:qrac} to get $r' \in \set{0, 1}^n$
			\item Return $\orac_\alpha(r')$
		\end{itemize}
	\end{mdframed}
	\caption{One time memories from QRACs and VBB obfuscation with only classical inputs}
	\label{fig:otm}
\end{protocol}

\subsubsection*{Correctness}
\begin{theorem}[Correctness]
	\label{thm:otmCorrect}
	For all $m_0, m_1 \in \Msgs$
	\[
		\Pr[\OTMRead(\OTMPrep(n, m_0, m_1), 0) = m_0] = 1 - \negl(n)
	\]
	and
	\[
		\Pr[\OTMRead(\OTMPrep(n, m_0, m_1), 1) = m_1] = 1 - \negl(n).
	\]
\end{theorem}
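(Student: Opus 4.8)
To prove \cref{thm:otmCorrect} the plan is to reduce correctness to the single-qubit recovery guarantee of the QRAC together with a Chernoff--Hoeffding concentration bound. Fix $\alpha \in \set{0,1}$ and messages $m_0, m_1 \in \Msgs$, and consider running $\OTMRead(\OTMPrep(n, m_0, m_1), \alpha)$. By construction $\OTMRead$ measures each tensor factor $\rho_i = \racEnc{r_0[i]\,r_1[i]}$ of $\rho$ separately in the basis $\mu_\alpha$ of \cref{subsec:qrac}, obtaining a string $r' \in \set{0,1}^n$, and then returns $\orac_\alpha(r') = \fuzzyLock_{r_\alpha, m_\alpha}(r')$. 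So it suffices to show that with probability $1 - \negl(n)$ the Hamming distance $|r' \oplus r_\alpha|$ is at most $(1 - \pAcc)n$, since then program~\ref{prog:fuzzy} outputs $m_\alpha$.

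The first step is the per-qubit guarantee: for every $b_0, b_1, \alpha \in \set{0,1}$, measuring $\racEnc{b_0 b_1}$ in $\mu_\alpha$ returns the outcome $b_\alpha$ with probability exactly $\cos^2(\pi/8)$. This is immediate from the definition of the optimal QRAC in \cref{subsec:qrac}: for $\alpha = 0$ one computes $|\langle 0 | \psi_{\pm\pi/8}\rangle|^2 = \cos^2(\pi/8)$ (and symmetrically for the $b_0 = 1$ states against $|1\rangle$), and for $\alpha = 1$ one uses $\langle \psi_a | \psi_b\rangle = \cos(a-b)$ to obtain the corresponding probabilities; this is exactly the optimal QRAC success probability recorded in \cref{subsec:qrac}, attained simultaneously by all four encodings and both measurements. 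Since $\rho = \bigotimes_{i\in[n]} \rho_i$ is a product state and $\OTMRead$ measures each qubit in the fixed basis $\mu_\alpha$, the outcomes $r'[1], \dots, r'[n]$ are mutually independent, so the error indicators $X_i := \mathbf{1}[\,r'[i] \neq r_\alpha[i]\,]$ are i.i.d.\ Bernoulli random variables with mean $q := \sin^2(\pi/8) = \frac{2 - \sqrt 2}{4}$.

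Next I would apply Hoeffding's inequality to $\sum_{i\in[n]} X_i = |r' \oplus r_\alpha|$. Since $1 - \pAcc = \frac{3}{20} > \frac{2 - \sqrt 2}{4} = q$ (a strict inequality, equivalent to $49 < 50$), the margin $\eta := 1 - \pAcc - q$ is a positive constant, and
\[
	\Pr\!\big[\,|r' \oplus r_\alpha| > (1 - \pAcc)n\,\big] \;=\; \Pr\!\Big[\textstyle\sum_{i\in[n]} X_i > (1 - \pAcc)n\Big] \;\le\; \exp(-2n\eta^2) \;=\; \negl(n).
\]
On the complementary event $\fuzzyLock_{r_\alpha, m_\alpha}(r') = m_\alpha$, so $\OTMRead$ returns $m_\alpha$; hence the success probability is at least $1 - \exp(-2n\eta^2) = 1 - \negl(n)$ (as usual, ``$= 1 - \negl(n)$'' in the theorem is read as ``at least $1 - \negl(n)$''). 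Nothing in the argument depends on $m_0, m_1$ or on which of $\alpha = 0,1$ we picked, so both displayed statements follow.

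The argument has no deep obstacle; the one thing that genuinely must be checked is that the constant $\pAcc = 0.85$ was chosen large enough. The margin $\eta = 1 - \pAcc - q \approx 0.0035$ is small, so one should confirm the strict inequality $\frac{2 - \sqrt 2}{4} < \frac{3}{20}$ (equivalently $49 < 50$) before concluding the Hoeffding exponent is a strictly negative constant times $n$. The other point worth stating precisely is the mutual independence of the $n$ measurement outcomes, which relies on $\rho$ being a product state over the $n$ QRAC qubits and on $\OTMRead$ measuring each qubit in the same fixed basis without any joint or adaptive operation.
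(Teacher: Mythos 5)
Your proposal is correct and follows essentially the same route as the paper's proof: per-qubit QRAC success probability $\cos^2(\pi/8) > \pAcc$, mutual independence of the $n$ measurement outcomes from the product structure, and a Chernoff/Hoeffding concentration bound to show the Hamming distance exceeds $(1-\pAcc)n$ only with negligible probability. The only difference is that you spell out the per-basis calculation and explicitly verify the margin inequality $\sin^2(\pi/8) < 3/20$, which the paper leaves implicit.
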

\begin{proof}
	The proof follows from a direct inspection of the protocol.
	Measuring $\rho_i$ in $\OTMRead$ according to $\mu_\alpha$ will return $r_\alpha[i]$ with probability $\cos^2(\pi / 8) > \pAcc$.
	Then, we can use a Chernoff bound
	\footnote{Note that independence follows from the honest evaluation of the protocol: each $\rho_i$ is an independent QRAC state and each measurement is done independently on each qubit.}
	to show that the probability of measuring $r'$ with $|r' \oplus r| > (1 - \pAcc) \cdot n$ is exponentially small in $n$.
\end{proof}

\subsubsection*{Soundness}
Let $p_i$ be the probability \emph{over $r_0, r_1$}, and \emph{the algorithm} that the $i$-th bit of the output of channel $\measChan$ is equal to the $i$-th bit of $r_\alpha$ given the QRAC states for all other bits for $\alpha \in \set{0, 1}$.
Formally,
\begin{align*}
	p_i = \Pr_{r_0, r_1, \calM}\left[x[i] = r_\alpha[i]  \mid x \gets \measChan\left(\bigotimes_{i \in [n]} \racEnc{r_0[i] r_1[i]}\right) \right].
\end{align*}
Now, before proving soundness, we first must show that $p_i$ must be sufficiently high for some subset of $[n]$ in order for $\calM$ to output a string close to $r_\alpha$.
Moreover, we must lower-bound the size of ``successful'' guesses for $r_\alpha[i]$ where the probability of success is at least $\pBad$.
We then use this lemma to show that the distinguishing probability for bits of $r_{1 - \alpha}[i]$ equaling $0$ or $1$ is upper-bounded away from $\pAcc$.

%Without loss of generality, we will assume that the adversary learns $m_0$ first (i.e. $\alpha = 0$).

\begin{lemma}[First Accepting Input]
	\label{lemma:accInp}
	Fix $\alpha \in \set{0, 1}$.
	Let $x$ be the output of $\measChan$ for a fixed $r_0, r_1$.
	Let
	\[
		\PLt = \set{ i \mid p_i < \pBad }.
	\]
	Then, if $|\PLt| \geq \fracBad$,
	\begin{equation}
		\label{eq:accInp}
		\Pr_{r_0, r_1, \measChan}[\orac_\alpha(x) \neq \bot \mid x \gets \measChan(\racEnc{r_0 r_1})] \leq \negl(n).
	\end{equation}
\end{lemma}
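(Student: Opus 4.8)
The plan is to show that, under the hypothesis $|\PLt| \ge \fracBad$, the expected number of coordinates on which the output $x$ of $\measChan$ agrees with $r_\alpha$ lies strictly below the acceptance threshold $\pAcc\,n$, and then to conclude by a concentration argument that the agreement count reaches $\pAcc\,n$ only with negligible probability. Recall that $\orac_\alpha(x)\neq\bot$ exactly when $|x\oplus r_\alpha|\le(1-\pAcc)n$, i.e.\ when $x$ and $r_\alpha$ agree on at least $\pAcc\,n$ positions. Take $\alpha=0$ without loss of generality; the case $\alpha=1$ is entirely analogous.

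First I would establish the per-coordinate ceiling $p_i\le\cos^2(\pi/8)<\pMax$ for every $i$. Conditioning on $r_0[j],r_1[j]$ for all $j\neq i$ fixes the input of $\measChan$ to $\psi\otimes\racEnc{r_0[i]\,r_1[i]}$ with $\psi$ a fixed state independent of $(r_0[i],r_1[i])$, so marginalizing the output to its $i$-th bit exhibits a two-outcome POVM $\{M_0,M_1\}$ on $\C^2$ with
\[
\Pr[x[i]=r_0[i]\mid r_{\neq i}]=\half\Tr[M_0\bar\rho_0]+\half\Tr[M_1\bar\rho_1],\qquad \bar\rho_b=\half\big(\racEnc{b0}+\racEnc{b1}\big).
\]
This is exactly the probability of recovering the first bit of the $2\mapsto1$ QRAC, hence at most $\cos^2(\pi/8)$ by optimality of the QRAC (equivalently $\bar\rho_0,\bar\rho_1$ are diagonal in the computational basis with eigenvalues $\cos^2(\pi/8),\sin^2(\pi/8)$ in opposite order, so $T(\bar\rho_0,\bar\rho_1)=\cos(\pi/4)$ and the distinguishing probability is $\half(1+\cos(\pi/4))=\cos^2(\pi/8)$). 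Averaging over the conditioning gives $p_i\le\cos^2(\pi/8)$. Writing $G:=n-|x\oplus r_0|$ for the agreement count, linearity and the hypothesis give
\begin{align*}
\E[G]=\sum_{i=1}^{n}p_i\ &<\ \pBad\,|\PLt|+\pMax\,(n-|\PLt|)\\
&\le\ \pBad\cdot\fracBad+\pMax\cdot\fracGood\ =\ 0.8444\,n\ <\ \pAcc\,n,
\end{align*}
using $p_i<\pBad$ on $\PLt$, $p_i\le\pMax$ elsewhere, $|\PLt|\ge\fracBad$, and that the bound is decreasing in $|\PLt|$.

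It then remains to show $\Pr[G\ge\pAcc\,n]\le\exp(-\Omega(n))=\negl(n)$, and here I would exploit that the input $\bigotimes_{i\in[n]}\racEnc{r_0[i]\,r_1[i]}$ is a product of $n$ independent single-qubit states and $G=\sum_i\mathbf{1}[x[i]=r_0[i]]$ is a sum of per-coordinate indicators. For a product adversary (one POVM per qubit, $x[i]$ its outcome) this is two applications of Hoeffding: conditioned on $(r_0,r_1)$ the indicators are independent, so $G$ concentrates around $\sum_i q_i(r_0,r_1)$ with $q_i(r_0,r_1)=\Pr_{\measChan}[x[i]=r_0[i]\mid r_0,r_1]$, and $\sum_i q_i(r_0,r_1)$ is itself a sum of independent $[0,1]$-valued functions of the individual pairs $(r_0[i],r_1[i])$ with mean $\sum_i p_i<0.8444\,n$, hence concentrates below $0.845\,n$. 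For a general collective adversary I would instead run a Doob martingale over the coordinates and apply Azuma, deriving the bounded-difference estimate from the tensor-product structure of the input (so that no small set of output bits can ``control'' the correctness of many others; concretely one expects $\Pr[x|_S=r_0|_S]\le\cos^{2|S|}(\pi/8)$ for every $S\subseteq[n]$, since a product figure of merit on a product state is optimized by a product strategy). Finally, a union bound over the polynomially many queries the adversary makes while it still sees only $\bot$, together with a standard VBB argument that its state is up to negligible error uncorrelated with the hidden $r_\alpha$ before its first non-$\bot$ query, upgrades this single-query bound to the statement of the lemma.

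The one genuinely delicate point is the concentration step for collective measurements: for product strategies it is routine, and everything else — the single-qubit optimality bound and the arithmetic $\pBad\cdot\tfrac{2}{5}+\pMax\cdot\tfrac{3}{5}<\pAcc$ — is elementary, but ruling out an adversary that entangles the output qubits so as to fatten the upper tail of $G$ requires carefully leveraging the product structure of the QRAC registers, either through an Azuma argument with a nontrivial bounded-difference bound or through the optimality of product strategies for product payoffs on product states.
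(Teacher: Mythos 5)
Your argument follows essentially the same structure as the paper's: bound each coordinate probability $p_i$ by $\pMax = \cos^2(\pi/8)$ via independence of the product QRAC registers, combine with the hypothesis $p_i < \pBad$ on $\PLt$ and $|\PLt| \geq \fracBad$ to get $\sum_i p_i \leq 0.845\,n < \pAcc\, n$, and then concentrate. For the concentration step the paper simply invokes its Appendix supermartingale tail bound (Azuma--Hoeffding for supermartingales over the coordinate-by-coordinate filtration), which is precisely the ``Doob martingale over coordinates + Azuma'' route you flag as the one to pursue for a collective adversary. So the core is the same; the difference is that the paper packages the Azuma step as a black-box lemma and does not dwell on its applicability, while you correctly surface it as the only step that requires real care. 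Your side remark that a product figure of merit on a product state is optimized by a product strategy is not something the paper relies on, and you should not need the multiplicative estimate $\Pr[x|_S = r_\alpha|_S] \leq \cos^{2|S|}(\pi/8)$ --- the supermartingale inequality needs only a one-sided conditional bound on each increment, which already follows from the independence of $(r_0[i], r_1[i])$ from the preceding coordinates.

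One genuine issue with your write-up: the final paragraph about a union bound over the adversary's polynomially many queries and a ``standard VBB argument'' is out of scope for this lemma. The lemma is a self-contained statement about a \emph{single} fixed channel $\measChan$ producing one output $x$, with probability taken over $r_0, r_1$ and the randomness of $\measChan$; the reduction from a multi-query adversary to a single channel (and the replacement of the pre-accepting oracle calls by null oracles) happens inside $\Hyb_1$--$\Hyb_2$ of \cref{thm:otmSound}, not here. Including that step in the proof of \cref{lemma:accInp} conflates the lemma with the theorem that invokes it. Drop it, commit to the Azuma/supermartingale bound for the concentration step as the paper does, and the remainder matches.
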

\begin{proof}
	First note that in \cref{eq:accInp}, the input to $\measChan$ is the tensor product of independent QRAC states.
	Thus, each state is independent of the other states and so the probability of measuring $r_\alpha[i]$ is upper-ounded by the optimal QRAC probability of success, $\pMax$, for all $i \in [n]$.
	Assuming that $|\PLt| \geq \fracBad$, we have that
	\[
		\sum_{i \in [n]} p_i \leq \pMax \cdot \fracGood + \pBad \cdot \fracBad \leq 0.845n.
	\]
	Let $X_i$ be the indicator random variable that $x[i] = r_\alpha[i]$.
	Then, by the definition of $p_i$ and the super-martingale tail bound in \cref{lemma:superTail}, we have that
	\[
		\Pr\left[\sum_i X_i \geq \pAcc \right] \leq 
		\Pr\left[\sum_i X_i - \sum_i p_i \geq \pAcc - \sum_i p_i \right] \leq \exp\left(
			-\frac{(\pAcc n - 0.845n)^2}{2n}\right) \leq \negl(n).
	\]
	And because $\Pr\left[\sum_i X_i \geq \pAcc n\right] \leq \negl(n)$, we can see that $\Pr[\orac_\alpha(x) \neq \bot] \leq \negl(n)$ by definition of program~\ref{prog:fuzzy}.
\end{proof}

We are now ready to prove the simulation security of the one-time memory protocol.

\begin{theorem}[Soundness]
	\label{thm:otmSound}
	For any quantum PPT adversary, $\calA$, there exists a simulator $\Sim$ for every $m_0, m_1 \in \Msgs$ such that $\Sim$ makes at most one query to $g^{m_0, m_1} : \set{0, 1} \rightarrow \set{m_0, m_1}$ (where $g(\alpha) = m_\alpha$) and
	\[
		\calA\left(\OTMPrep(1^{n}, m_0, m_1)\right) \overset{\negl(n)}{\approx_c} \SimOTM^{g^{m_0, m_1}}\left(1^{n} \right).
	\]
\end{theorem}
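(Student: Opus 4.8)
The plan is to build the simulator by running $\calA$ internally against ``dead'' oracles that reveal nothing, detect the single oracle $\calA$ actually manages to open, and then reduce soundness to the statement that $\calA$ cannot open both oracles at once. Concretely, on input $1^n$ the simulator $\Sim$ samples $r_0, r_1 \randomGets \set{0,1}^n$, prepares $\rho = \bigotimes_{i \in [n]} \racEnc{r_0[i]\,r_1[i]}$, and runs $\calA$ on $\rho$ together with two simulated oracles that initially answer $\bot$ on every query. The first time $\calA$ submits an \emph{accepting} query $x$ to the simulated $\orac_b$, i.e.\ one with $|x \oplus r_b| \leq (1-\pAcc)n$, the simulator spends its single allowed query $g(b)$ to obtain $m_b$ and from then on answers the $\orac_b$-oracle using the true function $\fuzzyLock_{r_b,m_b}$, while the other oracle keeps returning $\bot$ forever; finally $\Sim$ outputs whatever $\calA$ outputs.

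Let $\mathsf{Open}$ be the event, in the real execution of $\calA$, that $\calA$ submits an accepting query to \emph{both} $\orac_0$ and $\orac_1$. Conditioned on $\overline{\mathsf{Open}}$, the view handed to $\calA$ is distributed identically in the two experiments: only $\bot$ is returned before any accepting query, and after the first (and only) accepting query to some $\orac_b$ the simulator answers $\orac_b$ with exactly the true $\fuzzyLock_{r_b,m_b}$ while $\orac_{1-b}$ returns $\bot$ in both worlds. Hence for every distinguisher $D$,
\[
\left| \Pr[D(\calA(\OTMPrep(1^n,m_0,m_1)))=1] - \Pr[D(\SimOTM^{g^{m_0,m_1}}(1^n))=1] \right| \;\leq\; \Pr[\mathsf{Open}],
\]
so it remains to prove $\Pr[\mathsf{Open}] \leq \negl(n)$. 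To do this I would run the combinatorial argument underlying \cref{lemma:accInp} and \cref{cor:guess}. For $\alpha \in \set{0,1}$, view the string revealed by $\calA$'s first accepting query to $\orac_\alpha$ as the output of an effective channel $\measChan_\alpha$ on $\rho$ (for fixed $r_0,r_1$ this channel simulates the entire $\calA$--oracle interaction with $r_0,r_1$ hardwired, outputting junk if no accepting query occurs), and let $p_i^{(\alpha)}$ be the probability over $r_0, r_1$ and $\calA$ that this output agrees with $r_\alpha$ at coordinate $i$. If $\Pr[\mathsf{Open}]$ were non-negligible, then $\Pr[\calA \text{ opens } \orac_\alpha]$ is non-negligible for both $\alpha$, so \cref{lemma:accInp} rules out $|\PLt^{(\alpha)}| \geq \fracBad$, forcing $|\PGeq^{(\alpha)}| > \fracGood$ for both $\alpha \in \set{0,1}$, where $\PGeq^{(\alpha)} = \set{i : p_i^{(\alpha)} \geq \pBad}$. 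On the other hand, for each coordinate $i \in \PGeq^{(0)}$ the $i$-th QRAC register is first measured (inside $\measChan_0$) to yield $r_0[i]$ with probability $\geq \pBad$, and then, from the disturbed register together with $r_0[i]$ and side information independent of $r_1[i]$, $\measChan_1$ must still recover $r_1[i]$; \cref{cor:guess} (which relies on \cref{conj:nonconv}) caps that probability by $\guessProb < \pBad$, so $i \notin \PGeq^{(1)}$. Thus $\PGeq^{(0)}$ and $\PGeq^{(1)}$ are disjoint, giving $|\PGeq^{(0)}| + |\PGeq^{(1)}| \leq n$, which contradicts $|\PGeq^{(0)}| + |\PGeq^{(1)}| > 2\cdot\fracGood = \frac{6n}{5}$. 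Hence $\Pr[\mathsf{Open}] \leq \negl(n)$ and soundness follows.

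The hard part will be making the last step fully rigorous, that is, reducing the $n$-qubit, polynomial-query, adaptively interleaving adversary to the single-qubit statement of \cref{cor:guess}. Two issues need care. First, when isolating coordinate $i$, I would treat the other $n-1$ QRAC registers as the auxiliary state $\aux_0$ (independent of $(r_0[i],r_1[i])$ because the coordinates of $r_0,r_1$ are sampled independently) and the transcript of $\orac_0$-answers as $\aux_1$ (it may depend on $r_0[i]$, which is allowed, but must be argued independent of $r_1[i]$). Second, the $\bot$-answers that $\orac_1$ returns \emph{before} its first accepting query genuinely depend on $r_1$, so they do not a priori fit into $\aux_1$; to handle them one argues that polynomially many classical queries leak only a negligible amount about the uniformly random $r_1$ — each $\bot$ merely excludes a Hamming ball of relative radius $1-\pAcc$, an exponentially small fraction of $\set{0,1}^n$, so conditioning on these answers perturbs the relevant per-coordinate guessing probabilities by $o(1)$ — or, alternatively, one runs a hybrid over the $\orac_1$-queries that replaces honest answers by $\bot$ one at a time, charging each hop to the negligible probability of landing inside the ball around $r_1$. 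Assembling this leakage argument, together with the bookkeeping that turns ``$\calA$ opens both oracles in a single run'' (handling the two orderings of the two first-accepting queries symmetrically) into the simultaneous lower bounds on $|\PGeq^{(0)}|$ and $|\PGeq^{(1)}|$, is where the real effort lies; the simulator, the reduction to $\mathsf{Open}$, and the $\frac{6n}{5} > n$ counting are otherwise routine.
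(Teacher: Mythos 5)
Your proposal is correct and follows the same high-level skeleton as the paper: a simulator that runs $\calA$ against dead oracles until the first accepting query, reduction of soundness to ruling out the event that $\calA$ opens both oracles, and a combination of \cref{lemma:accInp}, \cref{cor:guess}, and the supermartingale tail bound to rule it out. Your lazy-query simulator (spend the $g$-query only once an accepting $x$ is observed) is in fact cleaner than the paper's presentation, which fixes $\alpha$ up front and leaves the selection somewhat implicit, and your handling of the $\bot$-answer leakage matches what the paper's $\Hyb_2$/$\Hyb_3$ replacements accomplish.

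The genuine divergence is in the final counting. The paper stays with one $\alpha$: after the first accepting call to $\orac_0$ it caps each $i \in \PGeq$ at guessing probability $\guessProb$ via \cref{cor:guess}, applies \cref{lemma:superTail} to show at most a $7/10$ fraction of $\PGeq$ can be guessed, concedes all of $\PLt$, and concludes at most $0.82n < \pAcc n$ bits of $r_1$ are recoverable, so $\orac_1$ never accepts. You instead invoke \cref{lemma:accInp} \emph{twice} to force $|\PGeq^{(0)}|, |\PGeq^{(1)}| > \fracGood$, use \cref{cor:guess} to show $\PGeq^{(0)} \cap \PGeq^{(1)} = \emptyset$, and get a $\frac{6n}{5} > n$ contradiction. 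This is shorter, but it applies \cref{lemma:accInp} to the channel $\measChan_1$ that has already interacted with $\orac_0$, so the per-bit bound $\pMax$ must be argued in the presence of knowledge of $r_0$ and $m_0$ and a post-$\measChan_0$ disturbed state — the lemma's proof, as written for a fresh channel on bare QRAC registers, does not immediately cover this. It does go through (the Helstrom bound between $\racEnc{b_0 0}$ and $\racEnc{b_0 1}$ with $b_0$ known equals $\cos^2(\pi/8) = \pMax$, and $r_0, m_0$ are independent of $r_1$, so data processing extends the bound), but that extension is an extra step your proof needs to make explicit; the paper's threshold argument sidesteps it by running the second tail bound directly off the $\guessProb$ cap from \cref{cor:guess} rather than off the QRAC bound. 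With that addition your route is a valid, and arguably tidier, alternative.
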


\begin{proof}
	We will consider a series of hybrids.
	\begin{itemize}
		\item $\Hyb_0$: The real protocol:
			\begin{enumerate}
				\item Run $\OTMPrep(n, m_0, m_1)$ to get $\orac_0$, $\orac_1$, and $\rho$. Give the adversary $\orac_0, \orac_1$, and $\rho$.
				\item Run $\calA(\orac_0, \orac_1, \rho)$ to return output $\out$
			\end{enumerate}
		%\item $\Hyb_1$: The same as $\Hyb_0$ except we use the simulator in \cref{def:vbb} to replace $\calA(C_0, C_1, \rho)$ with $\calS^{{C_0}, {C_1}}(1^{|C_0|}, 1^{|C_1|}, 1^{n}, \rho)$
		\item $\Hyb_1$: The same as $\Hyb_0$ except we consider an adversary, $\calA$, which makes $q$ total calls to both $\orac_0$ and $\orac_1$ for $q \in \poly(n)$. We break up $\calA$ into
			\[
				\calA = \calS_q \circ \oracP_{q - 1} \circ \calS_{q - 1} \circ \ldots \circ \oracP_1 \circ \calS_{1}
			\]
			where $\calS_j$ is a (quantum) PPT algorithm and the oracle $\oracP_j = (\orac_0, \orac_1, \Iden)$ where $\Iden$ is the identity channel to allow $\calS_j$ to pass state to $\calS_{j + 1}$.
			Moreover, we restrict $\calS_j$ to send only one message between both $\orac_0$ and $\orac_1$.
		\item $\Hyb_2$: Let $\ell \in [q]$ such that $\calS_\ell$ is the first algorithm which sends $x$ to $\oracP_\ell$ such that $\orac_\alpha(x) \neq \bot$ for either $\alpha = 0$ or $\alpha = 1$ with non-neglible probability.
			Replace $\oracP_w$ for $w \in [\ell - 1]$ with $\oracP_w' = (\bot, \bot, \Iden)$ where $\bot$ is the null oracle.
		\item $\Hyb_3$:  For a fixed $\alpha$, let $\ell' \in [q]$ with $\ell' > \ell$ such that $\calS_{\ell'}$ is the first algorithm to send $x$ to $\oracP_{\ell'}$ such that $\orac_{1 - \alpha}(x) \neq \bot$ with non-neglible probability.
			If no algorithm makes an accepting call to $\orac_{1 - \alpha}$ with non-neglible probability, then let $\ell' = q$.
			Then, for all $w \in \set{\ell, \ell + 1, \dots, \ell' - 1}$, replace $\oracP_w$ with $\oracP_w'' = (\orac_0, \orac_\bot, \Iden)$ if $\alpha = 0$ and $\oracP_w'' = (\orac_\bot, \orac_1, \Iden)$ where $\orac_\bot$ is the null oracle.
	\end{itemize}
	The first hybrid holds because $\Hyb_1$ is equivalent to $\Hyb_0$ as $\calA$ makes calls to oracles $\orac_0, {\orac_1}$ in $\Hyb_1$ and maintains internal state between each call.
	We can thus model $\calA$ as a set of (quantum) PPT algorithms between each call which can pass arbitrary state to the next algorithm. 
	For $\Hyb_2$, we assume that $\calS_{w}$ for $w \in [\ell - 1]$ does not make an accepting call to $\orac_0$ or $\orac_1$ with non-neglible probability.
	So, we replace the oracle calls with the null oracle.
	Similarly, for $\Hyb_3$, we assume that $\calS_{w}$ for $w \in [\ell, \ell' - 1]$ does not make an accepting call to $\orac_{1 - \alpha}$ with non-neglible probability and can thus be replaced by the null oracle.

	We can now outline our basic simulator, $\SimOTM^{g^{m_0, m_1}}(1^{n})$:
	\begin{itemize}
		\setlength{\itemsep}{0.1em}
		\item Run $g(\alpha)$ to get $m_\alpha$.
		\item Run $\OTMPrep(n, m_0', m_1')$ for $m'_\alpha = m_\alpha$ and $m'_{1 - \alpha} \randomGets \Msgs$ to get $\overline{\orac}_\alpha, \overline{\orac}_{1 - \alpha}, \overline{\rho}$ where $\overline{\orac}_{1 - \alpha}$ is the null oracle and $\overline{\orac}_\alpha$ is the same as in the protocol.
		\item Call $\advers(\overline{\orac}_0, \overline{\orac}_1, \overline{\rho})$ to get output $\out$.
	\end{itemize}

	To prove the indistinguishability of the adversary's view in $\Hyb_3$ and the simulator,
	we have two cases: if $\ell = q$, then $\calA$ never makes an accepting call to either $\orac_0$ or $\orac_1$ and so, $\orac_{1 - \alpha}$ is indistinguishable from the null oracle.
	So then, the simulator's oracle is indistinguishable from the real oracle.

	For the second case, assume that $\ell < q$. 
	Without loss of generality, assume that $\alpha = 0$ and that $\Sim$ makes a call to $g(0)$ (and thus learns $m_0$).
	We now want to show that with overwhelming probability, no accepting queries to $C_1$ are made by $\calS_\ell, \dots \calS_{q - 1}$ and so $\ell' = q$.
	Then, if $\ell' = q$, we have
	\[
		\calS_q \circ \calO''_{\ell} \circ \ldots \calO''_{\ell} \circ \calA_{\ell} \circ \ldots \circ \calO'_{\ell - 1} \circ \ldots \circ \calO'_1 \circ \calS_1\left(\OTMPrep(1^{n}, s_0, s_1)\right)
	\]
	and so $m_1$ is removed from the view of $\calS$ and $\orac_1$ is indistinguishable from the null oracle.

	Now, for the second case, we need to show that 
	\[
		\Pr[|r_1 - x| \geq \pAcc n \mid x \gets (S_{q - 1}, \dots S_\ell) ] \leq \negl(n).
	\]
	We first make use of \cref{lemma:accInp}.
	Note that for the first $\ell - 1$ calls, $\orac_0$ and $\orac_1$ are replaced with the null oracle.
	So, in order for $\calS_\ell$ to make an accepting call to $\orac_0$, it must be the case that $|\PGeq| \geq \fracGood$ where
	\[
		\PGeq = [n] \setminus \PLt
	\]
	as defined in \cref{lemma:accInp}.
	%Then, note that $\calS_w$ for $w \in [\ell - 1]$ has no auxilary information about $\rho$ and that $\rho$ is the tensor-product of independent QRAC states.
	Also, note that for $w' \in \set{\ell, \dots, \ell'}$, $\calS_{w'}$ gains information about $m_0$ and $r_0$ from $\orac_0$ while gaining no auxilary information about $r_1$ or $m_1$.
	Thus, any auxilary information learned from the circuit in $\orac_0$ is independent of $r_1$ and $m_1$.
	Let $\measChan$ be the quantum channel on $\rho$ from the first $\ell$ algorithms and $\measChan'$ be any subsequent (quantum) strategy for guessing a string close to $r_1$ prior to the $\ell'$-th call to the oracle.

	Thus, by \cref{claim:nonconv} and \cref{cor:guess}, for all $i \in \PGeq$,
	\[
		\Pr_{\calM, \calM', r_0, r_1}\left[
			r_1[i] = c[i] \mid c[i] \gets \measChan'\left(\measChan\left(\bigotimes_i \racEnc{r_0[i] r_1[i]}\right)
				\otimes r_0 \otimes m_0 
			\right) 
		\right] \leq \guessProb.
	\]
	%Then, again recall from \cref{lemma:accInp}, $\calP_{\geq \pBad} = \set{i \mid p_i \geq \pBad}$ and $\calP_{p_i < \pBad} = \set{i \mid p_i < \pBad}$.
	We can again invoke the super-martingale tail bound in \cref{lemma:superTail} to show that
	\[
		\Pr\left[
			\bigwedge_{i \in \calG} r_1[i] = c[i] \mid c[i] \gets \measChan'\left(\measChan\left(\bigotimes_i \racEnc{r_0^i r_1^i}\right) \otimes r_0 \otimes m_0 \right)
		\right] \leq \negl(n)
	\]
	for any $\calG \subseteq \PGeq$ and $|\calG| \geq \frac{7}{10} \cdot |\PGeq|$.

	So, with overwhelming probability, $\measChan'$ cannot guess any $70\%$ fraction (or more) of the bits in $\PGeq$ with non-neglible probability.

	Next, in the worst case, $\calA$ guesses $r_1[i]$ with probability $1$ for all $i \notin \PGeq$ (i.e. $i \in \PLt$)~\footnote{We suspect that the concrete bounds can be improved by considering an upper-bound of $\pMax$ instead of $1$ for the worst-case scenario though it is not clear how to do this given the auxiliary information leaked by $\orac_0$.}.
	Then note that $|\PLt| \leq \frac{2n}{5}$ and so $\measChan'$ cannot guess more that $\frac{2n}{5} + \frac{7}{10} \cdot \frac{3n}{5} \leq 0.82 n$ bits of $r_1$ with non-neglible probability.
	Because $0.82 n \leq \pAcc n$, we get
	\[
		\Pr\left[
			|r_1 - c| \geq \pAcc \cdot n \; \bigg| \; c \gets \measChan'\left(\measChan\left(\bigotimes_i \racEnc{r_0^i r_1^i}\right) \otimes r_0 \otimes m_0 \right)
		\right] \leq \negl(n).
	\]
	So finally, we have that $\advers$ does not make an accepting call to $\orac_1$ with non-neglible probability.
	So, $\orac_1$ is indistinguishable from the null oracle.
	Thus, $\SimOTM$ can simulate the view of $\calA$ as previously noted.
\end{proof}

\section*{Acknowledgments}
The author is grateful to the helpful discussions and feedback from Fabrizio Romano Genovese, Joseph Carolan, Stefano Gogioso, Matthew Coudron, and Ian Miers
LS acknowledges funding and support from the QSig Commision and from the NSF Graduate Research Fellowship Program.

\bibliographystyle{alpha}
\bibliography{bib/ref}

\appendix

\section{Tail Bounds}
\label{sec:tailbounds}

\begin{lemma}[Supermartingale Probability Tail Bounds]
	\label{lemma:superTail}
	Let $X_i$ for $i \in [n]$ be an indicator random variable that $f_i(x_i) = 1$ for some $f_i \randomGets F_i$ and set of functions $F_i$.
	Then, if
	\[
		\Pr_{x_i, f_i}[f_i(x_i) = 1 \mid x_1, \dots x_n] \leq p_i
	\]
	we have
	\[
		\Pr\left[
		\sum X_i - \sum p_i \geq t \right] \leq \exp\left(
			-\frac{t^2}{2n}
		\right)
	\]
\end{lemma}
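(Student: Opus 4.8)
The plan is to recognize $\sum_i X_i - \sum_i p_i$ as a supermartingale with increments bounded by $1$ and to apply the Azuma--Hoeffding inequality; equivalently, one runs the Chernoff / moment-generating-function argument by hand, which I would write out so as to land on exactly the constant in the statement. First I would fix a filtration $(\mathcal{F}_k)_{k\ge 0}$ adapted to the process, with $\mathcal{F}_k$ recording all the randomness that determines $X_1,\dots,X_k$ (the draws $f_1,\dots,f_k$ together with the inputs $x_1,\dots,x_k$, each of which may depend on the earlier draws). The hypothesis is read as the adaptive statement that, conditioned on the history $\mathcal{F}_{i-1}$, the indicator $X_i$ equals $1$ with probability at most $p_i$, i.e.\ $\E[X_i \mid \mathcal{F}_{i-1}] \le p_i$. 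Setting $S_0 = 0$ and $S_k = \sum_{i \le k}(X_i - p_i)$, one gets $\E[S_k - S_{k-1}\mid\mathcal{F}_{k-1}] = \E[X_k\mid\mathcal{F}_{k-1}] - p_k \le 0$, so $(S_k)$ is a supermartingale, and since $X_k \in \set{0,1}$ and $p_k \in [0,1]$ the increment obeys $|X_k - p_k| \le 1$.

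Next I would push the exponential moment through one step at a time. For $\lambda > 0$, Markov's inequality gives $\Pr[S_n \ge t] \le e^{-\lambda t}\,\E[e^{\lambda S_n}]$, and conditioning on $\mathcal{F}_{n-1}$ peels off one factor:
\[
\E[e^{\lambda S_n}] = \E\!\left[e^{\lambda S_{n-1}}\,\E[e^{\lambda(X_n - p_n)} \mid \mathcal{F}_{n-1}]\right].
\]
Because $X_n - p_n$ is supported on an interval of length $1$ with nonpositive conditional mean, Hoeffding's lemma --- or even the crude sub-Gaussian estimate $\E[e^{\lambda Y}] \le e^{\lambda^2/2}$ valid whenever $|Y|\le 1$, $\E Y \le 0$, $\lambda > 0$ --- bounds the inner factor by $e^{\lambda^2/2}$. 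Iterating over $k = n, n-1, \dots, 1$ yields $\E[e^{\lambda S_n}] \le e^{\lambda^2 n/2}$, hence $\Pr[S_n \ge t] \le e^{-\lambda t + \lambda^2 n/2}$, and optimizing at $\lambda = t/n$ gives the claimed bound $\exp(-t^2/(2n))$. (The sharper Hoeffding-lemma constant $e^{\lambda^2/8}$ would in fact give $e^{-2t^2/n}$, which is more than enough for every application in the paper.)

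The main obstacle is not any individual inequality --- each step above is textbook --- but rather pinning down the probabilistic model precisely enough that the supermartingale property genuinely holds in the two places the lemma is invoked (\cref{lemma:accInp} and the soundness proof of \cref{thm:otmSound}). Concretely, one must argue that at step $i$ the query/measurement point $x_i$ is determined by the history while the random object $f_i$ is drawn fresh and independently of $\mathcal{F}_{i-1}$, so that the hypothesis truly controls $\E[X_i \mid \mathcal{F}_{i-1}]$ and not merely an unconditional probability; I would state this adaptivity assumption explicitly as part of the lemma's setup. I would also note that in the degenerate case where the $x_i$ are all fixed in advance the $X_i$ become independent and the statement reduces to the ordinary one-sided Hoeffding bound for bounded independent variables.
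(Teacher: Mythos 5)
Your proposal takes essentially the same route as the paper: both identify $Y_k = \sum_{i \le k}(X_i - p_i)$ as a supermartingale with increments bounded by $1$ and then apply the Azuma--Hoeffding inequality, the only difference being that the paper cites the inequality while you unfold its Chernoff/MGF derivation by hand. Your closing remark about pinning down the filtration and the adaptive reading of the hypothesis is a fair and useful criticism of the lemma as stated --- the paper's own proof slides from $\E[X_i \mid X_1,\dots,X_{i-1},X_{i+1},\dots,X_n] \le p_i$ to $\E[X_i \mid X_1,\dots,X_{i-1}] \le p_i$ without comment, and making the ``$x_i$ is history-measurable, $f_i$ is fresh'' assumption explicit would strengthen the statement.
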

\begin{proof}
	Let $S_i = \sum_{j = 1}^i X_j$ and $S_0 = 0$.
	Then, let $Y_i = S_i - \sum_{j = 1}^i p_j$ where $Y_0 = 0$.
	Because $\E[X_i \mid X_1, \dots, X_{i - 1}, X_{i + 1}, X_n] \leq p_i$, we have
	\(
	\E[X_i \mid X_1, \dots X_{i - 1}] \leq p_i
	\)
	and so
	\begin{align*}
		\E[Y_{i + 1} \mid Y_1 \dots Y_i] &\leq S_i + \E[X_{i + 1}] - \sum_{j = 1}^{i} p_j - p_{i + 1} \\
						 &\leq S_i - \sum_{j = 1}^i p_j = Y_i.
	\end{align*}
	We thus have that $Y_0, \dots, Y_n$ is a supermartingale \cite{godbole1997beyond}.
	We can then note that $|Y_i - Y_j| \leq 1$ and so, using Azuma-Hoeffding's inequality for supermartingales \cite{azuma1967weighted, hoeffding1994probability},
	\[
		\Pr\left[
		Y_n  - Y_0 \geq t \right] \leq \exp\left(
		-\frac{t^2}{2n}\right).
	\]
	Because $Y_n = S_n - \sum_{i = 1}^n p_i$, we have that
	\[
		\Pr\left[
		S_n - \sum_{i = 1}^n p_i \geq t \right] \leq \exp\left(
		-\frac{t^2}{2n}
	\right)
\]
as desired.
\end{proof}
%
%Using the same technique as the above, except flipping the inequalities, we get the following lemma for submartingales.
%\begin{lemma}[Submartingale Probability Tail Bounds]
%	\label{lemma:subTail}
%	Let $X_i$ for $i \in [n]$ be an indicator random variable that $f_i(x_i) = 1$ for some $f_i \randomGets F_i$.
%	Then, if
%	\[
%		\Pr_{x_i, f_i}[f_i(x_i) = 1 \mid x_1, \dots x_n] \geq p_i
%	\]
%	we have
%	\[
%		\Pr\left[
%		\sum_i X_i - \sum_i p_i \leq t \right] \leq \exp\left(
%			-\frac{t^2}{2n}
%		\right)
%	\]
%\end{lemma}

\end{document}